\documentclass[11pt]{amsart}
\usepackage{geometry}                
\geometry{letterpaper}                   
\usepackage{graphicx}
\usepackage{amssymb}
\usepackage{epstopdf}
\usepackage{graphicx}
\usepackage{amsmath}
\usepackage{mathrsfs}
\usepackage{caption}
\usepackage{subcaption}
\usepackage{color}
\usepackage{enumerate}

\usepackage[normalem]{ulem}


\title{Incommensurate Heterostructures in Momentum Space}

\author[D. Massatt]{Daniel Massatt}
\address{D. Massatt \\ School of Mathematics \\ University of Minnesota \\ Minneapolis, Minnesota, 55455 \\ USA.}
\email{massa067@umn.edu}

\author[S. Carr]{Stephen Carr}
\address{S. Carr \\ Department of Physics \\ Harvard University \\ Cambridge, Massachusetts 02138 \\ USA}
\email{stephencarr@g.harvard.edu}

\author[M. Luskin]{Mitchell Luskin}
\address{M. Luskin \\ School of Mathematics \\ University of Minnesota \\ Minneapolis, Minnesota, 55455 \\ USA}
\email{luskin@umn.edu}

\author[C. Ortner]{Christoph Ortner}
\address{C. Ortner \\ Mathematics Institute \\ University of Warwick \\ Coventry CV4 7AL \\ UK}
\email{c.ortner@warwick.ac.uk}

\thanks{DM was supported by NSF PIRE Grant OISE-0967140 and ARO MURI Award W911NF-14-1-0247. SC and ML were supported in
  part by ARO MURI Award W911NF-14-1-0247. CO was supported by ERC Starting Grant
  335120.}

\date{\today}                                           

\keywords{momentum space, 2D, electronic structure, density of states, heterostructure}
\numberwithin{equation}{section}


\begin{document}

\newcommand{\ml}[1]{{\color{red} #1}}
\newcommand{\cml}[1]{{\small \it \color{red} [ML: #1]}}
\newcommand{\co}[1]{{\color{blue} #1}}
\newcommand{\cco}[1]{{\small \it \color{blue} [CO: #1]}}
\newcommand{\dm}[1]{{\color{magenta} #1}}
\newcommand{\stc}[1]{{\color{green} #1}}

\def\Xint#1{\mathchoice
{\XXint\displaystyle\textstyle{#1}}%
{\XXint\textstyle\scriptstyle{#1}}%
{\XXint\scriptstyle\scriptscriptstyle{#1}}%
{\XXint\scriptscriptstyle\scriptscriptstyle{#1}}%
\!\int}
\def\XXint#1#2#3{{\setbox0=\hbox{$#1{#2#3}{\int}$ }
\vcenter{\hbox{$#2#3$ }}\kern-.6\wd0}}
\def\mint{\Xint-}

 \newtheorem{assumption}{Assumption}
\newtheorem{remark}{Remark}
\newtheorem{prop}{Proposition}
\newtheorem{thm}{Theorem}
\newtheorem{lemma}{Lemma}
\newtheorem{definition}{Definition}
\newtheorem{corollary}{Corollary}
\numberwithin{definition}{section}
\numberwithin{thm}{section}
\numberwithin{remark}{section}
\numberwithin{prop}{section}
\numberwithin{corollary}{section}
\numberwithin{assumption}{section}
\numberwithin{lemma}{section}

\newcommand{\tbeta}{\tilde \beta}
\newcommand{\oJ}{\overline{J}}
\newcommand{\oI}{\overline{I}}
\newcommand{\oP}{\overline{P}}
\newcommand{\wH}{\widehat{H}}
\newcommand{\Id}{I}
\newcommand{\R}{\mathcal{R}}
\newcommand{\I}{\mathcal{I}}
\newcommand{\Tr}{\text{Tr}}
\newcommand{\TrN}{\text{Tr}_N}
\newcommand{\adj}{\text{adj}}
\newcommand{\per}{\text{per}}
\newcommand{\C}{\mathcal{C}}
\newcommand{\J}{\mathcal{J}}
\newcommand{\interior}{\text{int}}
\newcommand{\sch}{\mathcal{S}(\mathbb{R})}
\newcommand{\supp}{\text{supp}}
\newcommand{\Err}{\text{Err}}
\newcommand{\Imag}{\text{Im}}
\newcommand{\Real}{\text{Re}}
\newcommand{\rins}{r_{\text{ins}}}
\newcommand{\hQ}{Q}
\newcommand{\E}{\mathcal{E}}
\newcommand{\Mat}{\tilde H}
\newcommand{\G}{\mathcal{G}}
\newcommand{\Gt}{\widetilde{\mathcal{G}}}
\newcommand{\dhh}{\widehat{\delta h}}
\newcommand{\Z}{\mathbb{Z}^2}
\newcommand{\K}{\mathcal{R}^*} 
\newcommand{\tK}{{\tilde \R^*}}
\newcommand{\Br}{B_r(0)}
\newcommand{\B}{\mathcal{B}}
\newcommand{\A}{\mathcal{A}}
\newcommand{\yt}{\tilde{y}}
\newcommand{\Rs}{\mathscr{R}}
\newcommand{\Ha}{\mathcal{H}}
\newcommand{\SN}{\mathcal{S}_N}
\newcommand{\gap}{\text{gap}}
\newcommand{\inter}{\text{inter}}
\newcommand{\intra}{\text{intra}}
\newcommand{\Gammat}{\tilde{\Gamma}}
\newcommand{\OmegaMon}{\Omega}
\newcommand{\D}{\mathcal{D}}
\newcommand{\Hmon}{H}
\newcommand{\MatSpace}{ M_{|\Omega_r|}(\mathbb{C})}
\newcommand{\M}{\mathcal{M}}
\newcommand{\Msup}{S[\widehat{H}]}
\newcommand{\nmod}{\text{mod}}
\newcommand{\hOmega}{\widehat{\Omega}}
\newcommand{\T}{\mathcal{T}}
\newcommand{\U}{\mathcal{U}}
\newcommand{\rc}{r_c}
\newcommand{\V}{\mathcal{V}}
\newcommand{\moire}{\theta}
\newcommand{\mP}{\mathcal{P}}

\begin{abstract}
To make the investigation of electronic structure of incommensurate heterostructures computationally tractable, effective alternatives to Bloch theory must be developed. In \cite{massatt2017}, we developed and analyzed a real space scheme that exploits spatial ergodicity and near-sightedness. In the present work, we present an analogous scheme formulated in momentum space, which we prove have significant computational advantages in specific incommensurate systems of physical interest, e.g., bilayers {of a specified class of materials} with small rotation angles. We use our theoretical analysis to obtain estimates for improved rates of convergence with respect to total CPU time for our momentum space method that are confirmed in computational experiments.
\end{abstract}
\maketitle

\begin{section}{Introduction}

The electronic structure of 2D heterostructures is typically studied using Bloch theory. This method transforms the system into the Bloch basis, or into momentum space, reducing the problem size to that of the periodic cell \cite{kaxiras2003}. This approach breaks down for {\em incommensurate} bilayer systems where there is no periodic cell \cite{Terrones2014,2DPerturb15}. In \cite{massatt2017, carr2017} we introduced a method for calculating the Density of States (DoS) without using supercell approximations via locality of the Hamiltonian and equidistribution of the site configurations in real space.

In certain settings, the Bloch bases of the monolayers can still be used to approximate electronic properties such as the density of states when the two layers are similar in size. In \cite{weckbecker2016}, k.p theory uses the wave function basis in a tight-binding setting to get an approximate Hamiltonian accurate for an energy range around the graphene Dirac point. In \cite{koshino2015}, a locality method is introduced to calculate the Density of States in a momentum space formulation parallel to what was done in \cite{massatt2017} for real space.

In previous work it is left unclear for which materials and orientations the momentum space method is effective. For example it is suggested in \cite{koshino2015} that it is effective for arbitrary materials and relative orientations. Moreover, they use a standard circular local matrix approximation, which we demonstrate is sub-optimal.
In the present work, by establishing rigorous convergence we are able to derive (quasi-)optimal choices of local matrices. We also show how one can deduce, from the monolayer band structure alone, for what energies and materials the method is effective compared with the real space method {(Section \ref{sec:approx:region})}. For example, we show that at the famous bilayer graphene Dirac energy the momentum space method significantly outperforms the real space method, while at energies corresponding to flat bands, the momentum space method has no noticeable advantage. Moreover, we will see that the momentum space method is particularly effective when the {underlying Bravais lattices are ``very close'', e.g.
 rotated copies of one another with very small rotation angle.}
  This situation encompasses for example many interesting moir\'e problems {\cite{Dai2016,bistritzer2011}.}

We demonstrate the method is effective for energies where the monolayer band structure's corresponding level sets do not wrap in a ring around the Brillouin zone torus. It is shown that in the correct energy regions, this method converges asymptotically faster than the real space method.  The method also promises to be even more efficient for calculating more complicated electronic observables such as conductivity. However, it is not clear if this method can be extended to any more complicated systems than the homogenous incommensurate bilayers. For example, introducing atomic relaxation or defect distributions breaks the symmetry in the individual sheets, and the momentum space analysis breaks down.

{\it Outline: } In Section \ref{sec:incommensurate}, we introduce the momentum space formulation applied to the incommensurate bilayer system. In Section \ref{sec:approx}, we establish the convergence results. In Section \ref{sec:numerics}, we verify the convergence results for twisted bilayer graphene and the equivalence of the momentum space and real space formulations. In Section \ref{sec:proofs}, we give proofs for our results.

\end{section}

\begin{section}{Incommensurate Bilayer}
\label{sec:incommensurate}
\subsection{Bloch Theory for a Monolayer}
\label{sub:mon}
We begin by reviewing how Bloch theory can be employed to efficiently obtain the DoS in a periodic system.

Given a Bravais lattice
\begin{equation*}
\R = \{ A n \text{ : } n \in \mathbb{Z}^2\},
\end{equation*}
where $A$ is a $2\times2$ invertible matrix, we define its associated unit cell
\begin{equation*}
\Gamma := \{ A \beta : \beta \in [0,1)^2\}.
\end{equation*}
The reciprocal lattice and associated reciprocal lattice unit cell are,
respectively, given by
\begin{align*}
   \K &:= \{ 2\pi(A^{-T}) n : n \in \mathbb{Z}^2\},\\
   \Gamma^* &:= \{2 \pi (A^{-T}) \beta : \beta \in [0,1)^2\}.
\end{align*}

Let $\A$ be a set of orbital indices. Then the degree of freedom space
for the monolayer is 
\begin{equation*}
\OmegaMon = \R \times \A.
\end{equation*}
A (real space) Hamiltonian can be defined via the infinite matrix
\begin{equation}
   \label{def:Hmon}
   \Hmon_{R\alpha,R'\alpha'} = h_{\alpha\alpha'}(R-R')
\end{equation}
where $h_{\alpha\alpha'} : \R \rightarrow \mathbb{C}$, $h_{\alpha\alpha'} \in \ell^1(\R)$.
For $\psi \in \ell^2(\R \times \A)$, we define intralayer coupling
\begin{equation*}
[\Hmon\psi]_{\alpha}(R) = [h \ast \psi]_\alpha(R) := \sum_{\alpha' \in \A}\sum_{R' \in \R}
      h_{\alpha\alpha'}(R-R') \psi_{\alpha'}(R').
\end{equation*}

We are interested in electronic properties, which are dependent on the spectrum of the Hamiltonian. For an infinite matrix this is typically challenging. However, Bloch theory yields a convenient solution for the periodic system. For any index set $\I$ (e.g., $\I = \A$), we define the Bloch operator $\G : \ell^2(\R\times \I) \rightarrow C_{\text{per}}(\Gamma^*)\times \I$ by
\begin{equation*}
[\G \psi]_\alpha(q) = \sum_{R \in \R} \psi_{\alpha}(R) e^{-iq\cdot R}, \hspace{3mm} \alpha \in \I.
\end{equation*}
Note that for notation purposes we assume functions in $C_{\text{per}}(\Gamma^*)$ have periodic extensions so they can be defined over $\mathbb{R}^2$.
We will use $\I = \A$ to transform orbitals and $\I = \A^2$ to transform $h = (h_{\alpha\alpha'})_{\alpha\alpha' \in \A}$.

We apply the Bloch transform to $\Hmon \psi$ for $\psi \in \ell^2(\R \times \A)$ to get
\begin{equation}
\label{e:bloch}
\begin{split}
   [\G \Hmon \psi](q) &= \G [h \ast \psi](q) = [\G h ](q)[\G \psi](q).
\end{split}
\end{equation}
We can then rewrite the eigenproblem $\Hmon \psi = \epsilon \psi$ as
\begin{equation*}
[\G h](q) [\G \psi](q) = \epsilon_{q} [\G \psi](q),
\end{equation*}
where $[\G h](q)$ is an $|\A|\times|\A|$ self-adjoint matrix. The eigenstates can then be visualised as functions of $q$ (band structure);
see Figure \ref{fig:band_structure}.
\begin{figure}[ht]
\centering
\includegraphics[width=.6\linewidth]{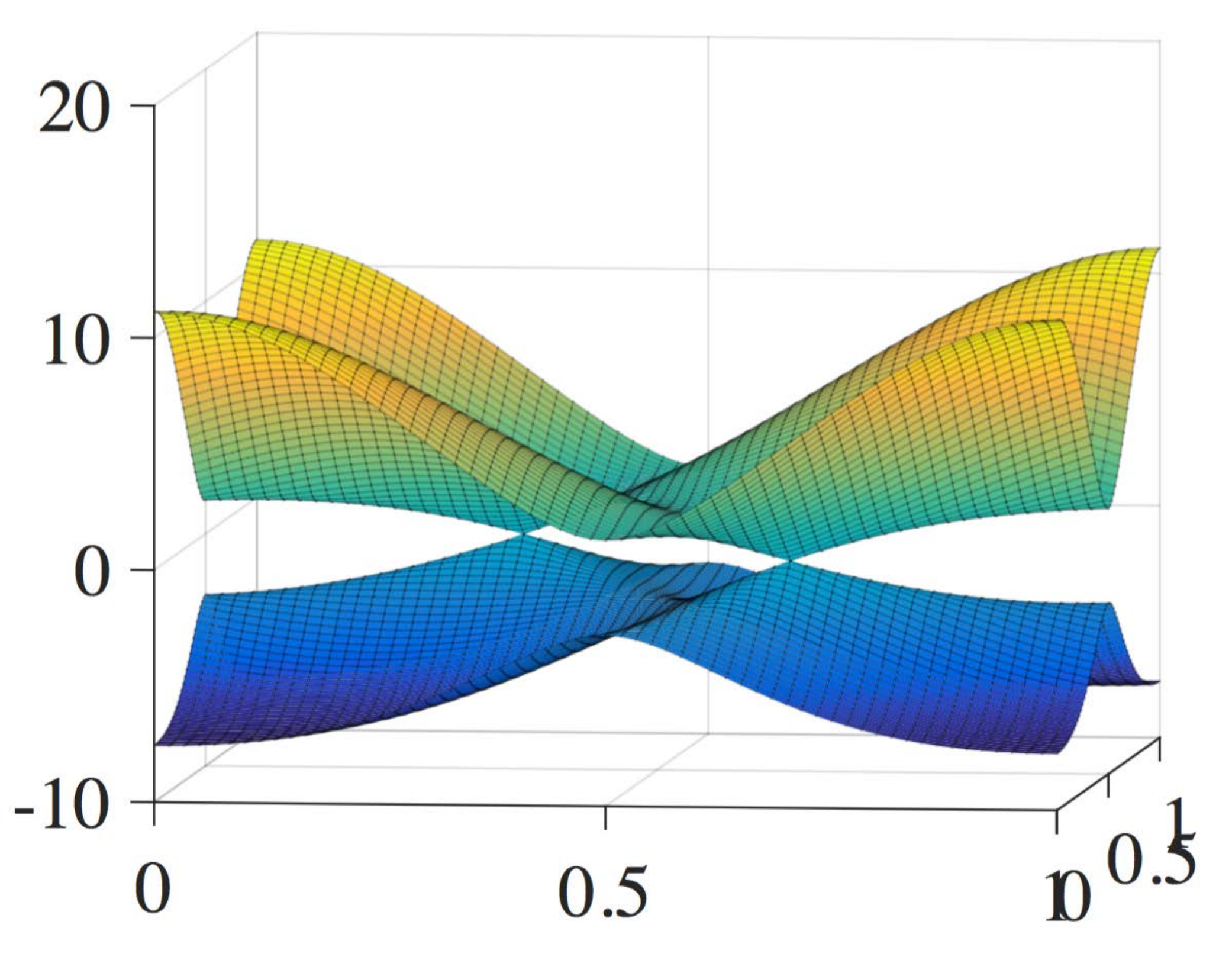}
\caption{Band structure for monolayer graphene using the tight-binding model from \cite{shiang2016}. Here there are only two orbitals because only the two $p_z$ orbitals are relevant for considering electronic properties of graphene near the Fermi level. We are plotting the eigenvalues of $\G h(2\pi A^{-T}y)$ over $ y \in [0,1)^2$.}
   \label{fig:band_structure}
\end{figure}

A natural ``weak'' definition of the density of states
   (heuristically, the distribution of eigenvalues) is
\begin{equation}
   \label{eq:mon_dos_limit}
   \D[H](g) := \lim_{r \rightarrow \infty} \frac{1}{|\Br| \cdot |\A|} \Tr_{\Br} g(H),
\end{equation}
where $\Br$ is the ball centered at the origin with radius $r$ and
$\Tr_{\Br}$ denotes the trace of the $\Omega \cap (\Br \times \A)$-submatrix.
Since $H$ is circulant, we have $[g(H)]_{R\alpha,R\alpha} = [g(H)]_{R'\alpha,R'\alpha}$
for $R,R' \in \R$ and hence we obtain
\begin{equation}
   \label{e:mon_dos}
   \D[H](g) = \frac{1}{|\A|} \sum_{\alpha \in \A}[g(H)]_{0\alpha,0\alpha}.
\end{equation}

Using the inverse operation of the Bloch operator,
\begin{equation*}
   \psi(R) = \mint_{\Gamma^*} [\G \psi](q)e^{iq\cdot R}dq,
\end{equation*}
we can obtain an equivalent momentum space expression for the DoS,
\begin{equation}
   \label{eq:Dhat_monolayer}
   \widehat{\D}[\Hmon](g) :=   \frac{1}{|\A| \cdot |\Gamma^*|}\sum_{\alpha \in \A}\int_{\Gamma^*}[g\circ[\G h](q)]_{\alpha\alpha}dq.
\end{equation}

Let $\mathbb{P}$ be the space of polynomials.

\begin{prop}
   Let $\Hmon$ be defined by \eqref{def:Hmon} with $h \in \ell^1(\R)$. Then
   \begin{equation*}
      \D[\Hmon](g) = \widehat{\D}[\Hmon](g) \qquad
      \forall\, g \in \mathbb{P}.
   \end{equation*}
\end{prop}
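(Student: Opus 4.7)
The plan is to exploit linearity of both $\D$ and $\widehat{\D}$ in $g$ to reduce the claim to the case of monomials $g(x) = x^n$ with $n \geq 0$, and then use the convolution-product identity \eqref{e:bloch} iteratively to represent the real-space diagonal entry $[\Hmon^n]_{0\alpha,0\alpha}$ as a momentum-space integral.

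Combining \eqref{e:mon_dos} with the target identity, the reduced claim is
\begin{equation*}
\frac{1}{|\A|}\sum_{\alpha \in \A}[\Hmon^n]_{0\alpha,0\alpha} \;=\; \frac{1}{|\A|\cdot|\Gamma^*|}\sum_{\alpha \in \A}\int_{\Gamma^*} \bigl[([\G h](q))^n\bigr]_{\alpha\alpha}\,dq.
\end{equation*}
For each $\alpha'\in \A$ I would introduce the unit vector $\psi^{\alpha'} \in \ell^1(\R\times\A)$ defined by $\psi^{\alpha'}_\beta(R) = \delta_{R,0}\delta_{\beta,\alpha'}$, whose Bloch transform is the constant $[\G\psi^{\alpha'}]_\beta(q)=\delta_{\beta,\alpha'}$. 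Since $h\in\ell^1(\R)$ and $\A$ is finite, Young's inequality shows that $\Hmon$ acts continuously on $\ell^1(\R\times\A)$, hence $\Hmon^n\psi^{\alpha'} \in \ell^1$ for every $n$, its Bloch transform is a continuous periodic matrix-valued function, and iterating \eqref{e:bloch} yields
\begin{equation*}
[\G\Hmon^n\psi^{\alpha'}]_\alpha(q) \;=\; \bigl[([\G h](q))^n\bigr]_{\alpha\alpha'}.
\end{equation*}

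Applying the inverse Bloch transform at $R=0$ then gives
\begin{equation*}
[\Hmon^n]_{0\alpha,0\alpha'} \;=\; [\Hmon^n\psi^{\alpha'}]_\alpha(0) \;=\; \mint_{\Gamma^*}\bigl[([\G h](q))^n\bigr]_{\alpha\alpha'}\,dq,
\end{equation*}
and specializing to $\alpha'=\alpha$, summing over $\alpha\in\A$, and dividing by $|\A|$ recovers $\widehat{\D}[\Hmon](x^n)$ as defined by \eqref{eq:Dhat_monolayer}.

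The only step that requires any care is the iterated use of \eqref{e:bloch} together with the pointwise inversion formula; both are controlled by the fact that $\ell^1(\R\times\A)$ is a Banach algebra under the matrix-valued convolution determined by $h$, so all Fourier series in sight converge absolutely and uniformly and the pointwise inversion at $R=0$ is unambiguous. Beyond this the argument is essentially a bookkeeping exercise, so no deep obstacle is expected.
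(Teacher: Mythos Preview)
Your proof is correct and follows essentially the same approach as the paper: reduce to monomials, introduce the unit vectors $e^\alpha$ (your $\psi^{\alpha'}$), iterate the convolution--product identity \eqref{e:bloch}, and invert the Bloch transform at the origin. Your version is in fact slightly more careful in justifying the $\ell^1$ setting for the iterated convolutions, which the paper leaves implicit.
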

\begin{proof}
   It suffices to prove this for $g(x) = x^n$.
   Let $\hat{e}^\alpha$ be defined over $\A$ such that $\hat e^\alpha_\beta = \delta_{\alpha\beta}$, and $e^\alpha : \R \to \mathbb{R}^\A$ given by
   \begin{equation*}
   [e^\alpha]_{\alpha'}(R') = \delta_{0R'}\delta_{\alpha\alpha'},
   \qquad \text{ for } R' \in \R, \quad \alpha, \alpha' \in \A.
   \end{equation*}
    We note that
   \begin{equation*}
      [\G e^\alpha](q) = \sum_R e^{-i q \cdot R}[ e^\alpha](R) = \hat e^\alpha,
   \end{equation*}
   hence we can deduce
   \begin{equation*}
   \begin{split}
   [e^\alpha]^*g(H)e^\alpha& = [\hat e^\alpha]^*\mint_{\Gamma^*} \G \bigl(g(H)e^\alpha\bigr)dq\\
   &= [\hat e^\alpha]^*\mint_{\Gamma^*} [\G (h \ast h \ast  \cdots \ast h \ast e^\alpha)] dq\\
   &=  [\hat e^\alpha]^* \mint_{\Gamma^*} [\G h(q)]^n \hat e^\alpha dq.
   \end{split}
   \end{equation*}
   Inserting this identity into \eqref{e:mon_dos} yields the desired result.
\end{proof}

Equations \eqref{e:mon_dos} and \eqref{eq:Dhat_monolayer} give, respectively,
the real space and momentum space definitions of the DoS in the simplest case of
a mono-layer Bravais lattice.

\subsection{Incommensurate Bilayer}
Before turning to a momentum space formulation for an incommensurate bilayer system,
we first review the representation formula for the DoS from \cite{massatt2017}.
We define the lattices for the two sheets by
\begin{equation*}
\R_j = \{ A_jn \text{ : } n \in \mathbb{Z}^2\},
\end{equation*}
where $A_j$ is a $2\times2$ invertible matrix. We define the {\em unit cell} for sheet $j$ as
\begin{equation*}
\Gamma_j = \{ A_j \beta \text{ : } \beta \in [0,1)^2\}.
\end{equation*}
\begin{definition}
Two Bravais lattices $\mathcal{L}_1$ and $\mathcal{L}_2$ are {\em incommensurate} if, for $v \in \mathbb{R}^2$,
\begin{equation*}
\mathcal{L}_1 \cup \mathcal{L}_2 + v = \mathcal{L}_1 \cup \mathcal{L}_2 \quad \Leftrightarrow \quad  v = \begin{pmatrix} 0\\0 \end{pmatrix}.
\end{equation*}
\end{definition}
\begin{assumption}
   \label{assump:incomm}
   The lattices $\R_1$ and $\R_2$ are incommensurate, and the reciprocal lattices $\K_1$ and $\K_2$ are incommensurate.
\end{assumption}
\begin{figure}[ht]
\centering
\includegraphics{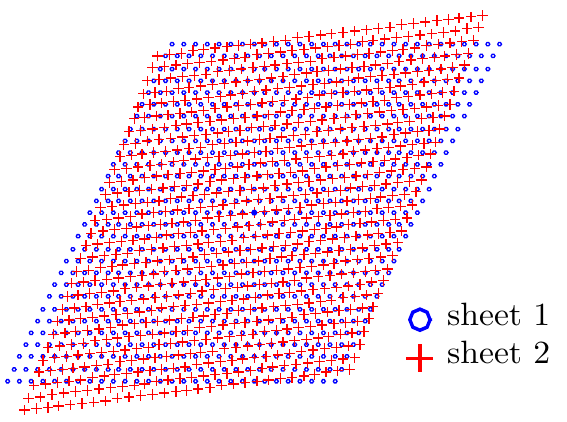}
\caption{An incommensurate hexagonal bilayer.  Sheet 1 is rotated by $\theta = 6^\circ$ relative to sheet 2.}
\label{fig:incommMoire}
\end{figure}

We briefly recall the formalism for the tight-binding model for bilayers \cite{massatt2017, shiang2016}. Let $\A_i$ denote the set of indices of orbitals associated with each unit cell
of sheet $i$. We assume that $\A_i$ are finite and that $\A_1 \cap \A_2 =
\emptyset$. We let $\Omega_j = \R_j \times \A_j$. Then
 the full degree of freedom space is
\begin{equation*}
   \Omega = \Omega_1 \cup \Omega_2.
\end{equation*}
The interaction between orbitals indexed by $R\alpha$ and $R'\alpha'$ is denoted by $h_{\alpha\alpha'}(R-R')$, where $h_{\alpha\alpha'} : \mathbb{R}^2 \rightarrow \mathbb{C}$ if $\alpha$ and $\alpha'$ are not from the same sheet. Recall $h_{\alpha\alpha'} :\R_j \rightarrow \mathbb{C}$ if $\alpha, \alpha' \in \A_j$. Although the sheets have a vertical displacement between them,we assume that
this distance is constant and hence can be encoded into $h_{\alpha\alpha'}$
(recall also that $\A_1 \cap \A_2 = \emptyset$). We also define
the transpositions $F_1 = 2, F_2 = 1$.

Throughout the analysis of incommensurate bilayers, we employ the following standing assumption.

\begin{assumption}
 \label{assump:decay}
   The inter-layer interaction, $h_{\alpha\alpha'}$ for $\alpha \in \A_j, \alpha' \in \A_{F_j}$, is analytic. Moreover, all interactions decay exponentially. More precisely, for $j = 1, 2,$
   \begin{align*}
         |h_{\alpha\alpha'}(R)| &\leq C e^{-\tilde{\gamma}|R|} \qquad
               \text{for all } \alpha, \alpha' \in \A_j, \quad R \in \R_j,  \qquad \text{and}  \\
   |\partial_{x_1}^{m_1} \partial_{x_2}^{m_2}  h_{\alpha\alpha'}(x)| &\leq C_{m}
      e^{-\tilde{\gamma}_{m}|x|} \qquad \text{for } x \in \mathbb{R}^2,
      \quad m \in (\{0\} \cup \mathbb{N})^2, \quad \alpha \in \A_j, \alpha' \in \A_{F_j}.
   \end{align*}
\end{assumption}

We define the Hamiltonian matrix $H$ by
\begin{equation*}
H_{R\alpha,R'\alpha'} = h_{\alpha\alpha'}(R-R') \qquad \text{ for } (R\alpha, R'\alpha') \in \Omega \times \Omega.
\end{equation*}
We then define the intralayer convolution operator $d \ast \psi$ for $d = {\rm diag}(h)$ as
\begin{equation*}
[d \ast \psi ]_\alpha(R) = \sum_{R'\alpha' \in \Omega_j}h_{\alpha\alpha'}(R - R')\psi_{\alpha'}(R')\qquad\text{for }R\alpha \in \Omega_j.
\end{equation*}

Next,  we define shift-dependent interlayer coupling functions
\begin{equation*}
   h^{bj}_{\alpha\alpha'}(x) = h_{\alpha\alpha'}(x -b\delta_{\alpha \in \A_{F_j}}+b \delta_{\alpha' \in \A_{F_j}})
\end{equation*}
and the interlayer coupling convolution for $R\alpha \in \Omega_i$ as
\begin{equation*}
[h^{bj} \ast \psi]_\alpha(R) = \sum_{R'\alpha' \in \Omega_{F_i}}h^{bj}_{\alpha\alpha'}(R-R')\psi_{\alpha'}(R').
\end{equation*}

We can then define the matrix $H_j(b) : \ell^2(\Omega) \rightarrow \ell^2(\Omega)$ by
\begin{equation*}
H_{j}(b)\psi = d \ast \psi +  h^{bj} \ast \psi.
\end{equation*}
This is the Hamiltonian matrix centered at shift $b \in \Gamma_{F_j}^*$.
Notice that intralayer interaction in this model is shift independent.

For any $g \in \mathbb{P}$, $g(H)$ is well defined, and hence the DoS can be defined
as the limit
\begin{equation*}
   \D[H](g) := \lim_{r \rightarrow \infty}\frac{1}{|B_r(0)|}  \Tr_{B_r(0)} g(H),
\end{equation*}
where $\Tr_{B_r(0)}$ is the trace over $\Omega \cap (B_r(0) \times \A)$; see \cite{cances2016}
for a proof that this limit exists. We then recall from \cite{massatt2017} the
representation formula
\begin{align}
\label{e:dos_incomm}
\D[H](g) &= \nu \sum_{j=1}^2 \sum_{\alpha \in \A_j} \int_{b \in  \Gamma_{F_j}}  \bigl[ g\circ H_j(b) \bigr]_{0\alpha,0\alpha} db, \\
\notag
   & \qquad \nu = (|\A_1|\cdot |\Gamma_2| + |\A_2|\cdot |\Gamma_1|)^{-1},
\end{align}
where ``sampling'' over lattice sites (the trace) is replaced by an integral
over relative shifts between the two layers.

\subsection{Momentum Space Formulation for an Incommensurate Bilayer}
\label{subsec:momentum}
We now transform \eqref{e:dos_incomm} to the momentum space setting.
To that end, we define some additional notation, starting with
the reciprocal lattices with associated unit cells
\begin{align*}
   \K_j &= \{ 2\pi(A_j^{-T}) n \text{ : } n \in \mathbb{Z}^2\},\\
   \Gamma_j^* &= \{ 2\pi (A_j^{-T}) \beta \text{ : } \beta \in [0,1)^2 \},\\
   \Omega_j^* &= \K_{F_j} \times \A_j,\\
   \Omega^* &= \Omega_1^* \cup \Omega_2^*.
\end{align*}
Let $\G_j : \ell^2(\R_j \times \I) \to L^2(\Gamma_j^*) \times \I$ denote the Bloch operator defined analogously to $\G$ in \ref{e:bloch}.
We define the Fourier transform for $\alpha \in \A_j, \alpha' \in \A_{F_j}$, by
\begin{equation*}
\hat{h}_{\alpha\alpha'}(q) = \int h_{\alpha\alpha'}(x)e^{-iq\cdot x}.
\end{equation*}

For $q \in {\mathbb R}^2,$ we define an intralayer operator $\G h^q : \ell^\infty(\Omega^*) \rightarrow \ell^\infty(\Omega^*)$ centered at $q$ by
\begin{equation*}
[\G h^q \hat \psi]_\alpha(R^*) = \sum_{\alpha' \in \A_j}[\G_j h]_{\alpha\alpha'}(q+R^*)\hat \psi_{\alpha'}(R^*)
\quad\text{for } R^*\alpha \in \Omega_j^*,\,\hat\psi \in \ell^\infty(\Omega^*).
\end{equation*}
For $R^*\alpha \in \Omega_j^*,$ the inter-layer coupling  is given by
\begin{equation*}
\begin{split}
[\hat{h}^q \ast \hat \psi]_\alpha(R^*) &= \sum_{\tilde R^*\tilde \alpha \in \Omega_{F_j}^*} \hat{h}^q_{\alpha\tilde\alpha}(R^*+\tilde R^*)\hat\psi_{\tilde \alpha}(\tilde R^*) ,\\
& \text{where }  \hat{h}^q( \xi) =\sqrt{ |\Gamma_1^*|\cdot|\Gamma_2^*|}\hat{h}(q+\xi).
\end{split}
\end{equation*}
We define $\wH(q) : \ell^\infty(\Omega^*) \rightarrow \ell^\infty(\Omega^*)$ as a sum of intralayer and interlayer components:
\begin{equation*}
\wH(q)\hat \psi = \G h^q \hat\psi + \hat{h}^q \ast \hat\psi.
\end{equation*}
The intralayer operator $\G h^q$ is block-diagonal, but in principle the
interlayer operator $\hat{h}^q \ast$ is non-local. However, it is exponentially decaying, and hence we can introduce a finite cut-off approximation to make it a local operation.

In what follows, we use the convention that, if $ \psi$ is defined over $\Omega$, then $\psi = (\psi^1,\psi^2)$, where $\psi^j = \psi|_{\Omega_j}$.
The real space system is described in terms of real space lattices $\R_j$ and real space shifts $b \in \Gamma_j$, while momentum space is described in terms of reciprocal lattices $\R^*_j$ with shifts in the reciprocal lattice unit cells $\Gamma_j^*$. To transform between the two descriptions, we define the operator $G_q^{bj} : \ell^2(\Omega) \rightarrow \ell^\infty(\Omega^*)$ such that, for $R^*\alpha \in \Omega_k^*$, we have
\begin{equation}
\label{e:IncomBloch}
[G_q^{bj}\psi]_\alpha(R^*) := |\Gamma_k^*|^{-1/2} e^{(-1)^{j+k}ib \cdot (q/2+R^*)} [\G_k\psi^k]_{\alpha}(q+R^*) .
\end{equation}
This is a transformation from the real space to coupled Bloch waves. The phase factor corrects for the relative shifts in the transformation, as we see from the following lemma.

\begin{lemma}
\label{lemma:transform}
Under Assumptions \ref{assump:incomm} and \ref{assump:decay}, we have
\begin{equation*}
G_q^{bj}[H_j(b)\psi] = \wH(q)G_q^{bj}\psi \qquad \forall \psi \in \ell^2(\Omega).
\end{equation*}
This identity decomposes into intralayer and interlayer interactions as follows:
\begin{align*}
   G_q^{bj}[d \ast \psi] &= \G h^q G_q^{bj}\psi \qquad \text{and} \\
   G_q^{bj}[h^{bj} \ast \psi] &= \hat h^q \ast G_q^{bj}\psi.
\end{align*}
\end{lemma}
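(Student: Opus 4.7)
By linearity and the decomposition $H_j(b) = d\ast + h^{bj}\ast$, it suffices to prove the two identities $G_q^{bj}[d\ast\psi] = \G h^q G_q^{bj}\psi$ and $G_q^{bj}[h^{bj}\ast\psi] = \hat h^q \ast G_q^{bj}\psi$ separately. The intralayer part is essentially the classical monolayer Bloch identity \eqref{e:bloch} applied sheet-by-sheet: since $d\ast$ preserves layers, for $R^*\alpha \in \Omega_k^*$ the operator $d\ast$ acts only on $\psi^k$, and the convolution theorem on $\R_k$ gives $[\G_k(d\ast\psi)^k]_\alpha(q+R^*) = \sum_{\alpha'\in\A_k}[\G_k h]_{\alpha\alpha'}(q+R^*)[\G_k\psi^k]_{\alpha'}(q+R^*)$. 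The phase prefactor $|\Gamma_k^*|^{-1/2}e^{(-1)^{j+k}ib\cdot(q/2+R^*)}$ in $G_q^{bj}$ depends only on $q$ and $R^*$, so it factors unchanged through the sum over $\alpha'$, matching the action of $\G h^q$.

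The interlayer identity is the heart of the lemma. Write $s_k := (-1)^{j+k}$, so that $s_{F_k}=-s_k$. For $\alpha \in \A_k$, $\alpha'\in \A_{F_k}$ exactly one of $\alpha,\alpha'$ lies in $\A_{F_j}$, and I would first observe that $h^{bj}_{\alpha\alpha'}(x) = h_{\alpha\alpha'}(x + s_k b)$. Fixing $R^*\alpha\in\Omega_k^*$ and applying $\G_k$ to the real-space interlayer convolution produces
\[
\sum_{R\in\R_k}\sum_{R'\in\R_{F_k}}\sum_{\alpha'\in\A_{F_k}} h_{\alpha\alpha'}(R-R'+s_k b)\,\psi^{F_k}_{\alpha'}(R')\, e^{-i(q+R^*)\cdot R}.
\]
Since $R$ ranges over $\R_k$ while the argument of $h_{\alpha\alpha'}$ is a continuous variable, the natural tool is Poisson summation over $\R_k$: applying $\sum_{R\in\R_k} f(R)e^{-ip\cdot R} = |\Gamma_k|^{-1}\sum_{\tilde R^*\in\K_k}\hat f(p+\tilde R^*)$ to $f(x) = h_{\alpha\alpha'}(x-R'+s_k b)$, the translation rule of the Fourier transform converts the above into a sum over $\tilde R^*\in\K_k$ of $\hat h_{\alpha\alpha'}(q+R^*+\tilde R^*)$ multiplied by a phase $e^{i(q+R^*+\tilde R^*)\cdot(s_k b - R')}$. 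The $R'$-dependent factor of this phase together with the remaining sum over $R'$ reassembles $[\G_{F_k}\psi^{F_k}]_{\alpha'}(q+R^*+\tilde R^*)$.

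The crux is then the bookkeeping of the remaining phase. Combining the prefactor $|\Gamma_k^*|^{-1/2}e^{s_k ib\cdot(q/2+R^*)}$ from $G_q^{bj}$ on the output side with the Poisson output $e^{is_k b\cdot(q+R^*+\tilde R^*)}$, I would split the exponent symmetrically as $s_k b\cdot(q+R^*+\tilde R^*) = s_k b\cdot(q/2+R^*) + s_k b\cdot(q/2+\tilde R^*)$; since $s_{F_k} = -s_k$, the second summand is exactly the phase $e^{s_{F_k} ib\cdot(q/2+\tilde R^*)}$ appearing in $[G_q^{bj}\psi]_{\alpha'}(\tilde R^*)$ for $\tilde R^*\alpha'\in\Omega_{F_k}^*$. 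The leftover normalization $|\Gamma_k|^{-1}|\Gamma_k^*|^{-1/2}|\Gamma_{F_k}^*|^{1/2}$ from Poisson and the prefactors, combined with the definition $\hat h^q(\xi) = \sqrt{|\Gamma_1^*|\,|\Gamma_2^*|}\,\hat h(q+\xi)$ and $|\Gamma_k||\Gamma_k^*|=(2\pi)^2$ up to Fourier convention, is precisely what converts the expression into $[\hat h^q\ast G_q^{bj}\psi]_\alpha(R^*)$. Absolute convergence of all exchanged sums is guaranteed by the exponential decay in Assumption \ref{assump:decay} together with $\psi\in\ell^2(\Omega)$. The main obstacle I anticipate is this phase split: it is precisely what dictates the somewhat unusual factor $e^{(-1)^{j+k}ib\cdot(q/2+R^*)}$ in the definition of $G_q^{bj}$, and once recognised as a symmetric distribution of the real-space shift between the two layers' momenta, the algebra closes.
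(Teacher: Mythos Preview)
Your proposal is correct and follows essentially the same approach as the paper's proof. The only cosmetic difference is that you invoke Poisson summation directly on $\sum_{R\in\R_k} f(R)e^{-ip\cdot R}$, whereas the paper first inserts the Fourier-integral representation $h_{\alpha\alpha'}(x)=\int\hat h_{\alpha\alpha'}(\xi)e^{-i\xi\cdot x}d\xi$ and then applies the distributional Dirac-comb identity $\sum_{R\in\R_k}e^{ix\cdot R}=|\Gamma_k^*|\sum_{R^*\in\K_k}\delta(x-R^*)$; these are equivalent formulations of the same step, and the subsequent symmetric phase-splitting $s_k b\cdot(q+R^*+\tilde R^*)=s_k b\cdot(q/2+R^*)+s_k b\cdot(q/2+\tilde R^*)$ is exactly what the paper does as well.
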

\begin{proof}
See Section \ref{proof:transform}.
\end{proof}

Thus, the eigenproblems in real space and momentum space are, respectively,
given by
\begin{align*}
& d \ast \psi + h^{bj}\ast \psi = \epsilon \psi,  \qquad \text{and} \\
& \G h^q \hat{\psi} + \hat h^q \ast \hat \psi = \epsilon \hat \psi.
\end{align*}
Note in particular that, without interlayer interaction, this reverts to simple Bloch theory on each independent sheet.

Finally, we obtain the following expression for the DoS, in analogy with the real space formulation (\ref{e:dos_incomm}).

\begin{thm}
\label{thm:dos}
   Under Assumptions \ref{assump:incomm} and \ref{assump:decay} and for $g \in \mathbb{P}$,
   the Density of States is equivalently given by
\begin{align}
      \label{eq:dos_ms}
\D[H](g) &= \nu^*\sum_{j=1}^2\sum_{\alpha \in \A_j} \int_{\Gamma_j^*}\bigl[g\circ \widehat{H}(q)\bigr]_{0\alpha,0\alpha}  dq, \\
\notag
   & \qquad  \nu^* = (|\A_1|\cdot |\Gamma_1^*| + |\A_2|\cdot|\Gamma_2^*|)^{-1}.
\end{align}
\end{thm}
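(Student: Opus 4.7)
\medskip
\noindent\textbf{Proof proposal.}
By linearity of $\D[H]$ and $\widehat{\D}$ in $g$, it suffices to prove the identity for $g(x)=x^n$, $n\ge 0$. Starting from the real space representation \eqref{e:dos_incomm}, the plan is to rewrite each term $[g\circ H_j(b)]_{0\alpha,0\alpha}$ as an integral over the momentum cell $\Gamma_j^*$ using Lemma~\ref{lemma:transform}, and then show that integration over the shift $b\in\Gamma_{F_j}$ collapses the excited momenta $R^*\neq 0$, leaving precisely the momentum space diagonal matrix element of $\wH(q)^n$.

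Concretely, fix $\alpha \in \A_j$ and let $e^\alpha\in\ell^2(\Omega)$ be the canonical basis vector at site $0$, orbital $\alpha$, supported on layer $j$. For $\phi = g(H_j(b))e^\alpha\in\ell^2(\Omega)$, the definition \eqref{e:IncomBloch} with $R^*=0$ gives
\begin{equation*}
[\G_j\phi^j]_\alpha(q) \;=\; |\Gamma_j^*|^{1/2}e^{-i b\cdot q/2}\,[G_q^{bj}\phi]_\alpha(0),
\end{equation*}
so the Bloch inversion formula (applied on layer $j$) yields
\begin{equation*}
[g\circ H_j(b)]_{0\alpha,0\alpha} \;=\; \phi^j_\alpha(0) \;=\; |\Gamma_j^*|^{-1/2}\int_{\Gamma_j^*} e^{-i b\cdot q/2}\,[G_q^{bj}\phi]_\alpha(0)\,dq.
\end{equation*}
Iterating Lemma~\ref{lemma:transform} gives $G_q^{bj}\phi = \wH(q)^n G_q^{bj}e^\alpha$, so by linearity the $b$-integration can be pulled inside $\wH(q)^n$. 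The key computation is then
\begin{equation*}
\Phi_q \;:=\; \int_{\Gamma_{F_j}} e^{-i b\cdot q/2}\,G_q^{bj}e^\alpha\, db.
\end{equation*}
Since $e^\alpha$ is supported on layer $j$, formula \eqref{e:IncomBloch} makes $G_q^{bj}e^\alpha$ vanish on $\Omega_{F_j}^*$, while on $\Omega_j^*$ it equals $|\Gamma_j^*|^{-1/2}e^{i b\cdot(q/2+R^*)}\delta_{\alpha\alpha'}$. The $b$-integral of $e^{i b\cdot R^*}$ over $\Gamma_{F_j}$ is $|\Gamma_{F_j}|$ when $R^*=0$ and vanishes otherwise (using $R^*\in\K_{F_j}$ and the standard orthogonality of exponentials on the unit cell of $\R_{F_j}$). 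Hence $\Phi_q = |\Gamma_{F_j}|\,|\Gamma_j^*|^{-1/2}\,\delta_{(0,\alpha)}$, i.e.\ a scalar multiple of the standard basis vector in $\ell^\infty(\Omega^*)$ at momentum $0$ and orbital $\alpha$.

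Combining these steps,
\begin{equation*}
\int_{\Gamma_{F_j}}[g\circ H_j(b)]_{0\alpha,0\alpha}\,db
\;=\; \frac{|\Gamma_{F_j}|}{|\Gamma_j^*|}\int_{\Gamma_j^*}\bigl[g\circ\wH(q)\bigr]_{0\alpha,0\alpha}\,dq.
\end{equation*}
Substituting into \eqref{e:dos_incomm} and using $|\Gamma_i|\cdot|\Gamma_i^*|=(2\pi)^2$ to check that $\nu\cdot|\Gamma_{F_j}|/|\Gamma_j^*|$ is independent of $j$ and equals $\nu^*$ finishes the proof. The only nontrivial technical point is the interchange of the $b$- and $q$-integrations with the power expansion of $\wH(q)^n$; this is justified because the matrix entries of $\wH(q)$ decay exponentially in $R^*$ by Assumption~\ref{assump:decay}, so the relevant sums/integrals are absolutely convergent and Fubini applies. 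I expect the main obstacle (more conceptual than technical) to be keeping the Bloch normalization conventions and the $|\Gamma_{F_j}|$ vs.\ $|\Gamma_j^*|$ prefactors straight through the transformation.
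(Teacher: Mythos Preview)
Your proposal is correct and follows essentially the same approach as the paper. The paper packages the intermediate identity as a separate lemma (Lemma~\ref{lemma:local_dos}) expressing the local density of states as $\D_\alpha[H](b,g)=\mint_{\Gamma_j^*}[\hat e^\alpha]^*(g\circ\wH(q))\Phi^{bj}I^\alpha\,dq$ via a diagonal phase operator $\Phi^{bj}$, and then integrates $\Phi^{bj}I^\alpha$ over $b\in\Gamma_{F_j}$ to obtain $|\Gamma_{F_j}|\hat e^\alpha$; you carry out the same computation inline by pulling the $b$-integral through the $b$-independent operator $\wH(q)^n$ and using the same orthogonality $\int_{\Gamma_{F_j}}e^{ib\cdot R^*}\,db=|\Gamma_{F_j}|\delta_{R^*,0}$ for $R^*\in\K_{F_j}$, arriving at the identical prefactor $|\Gamma_{F_j}|/|\Gamma_j^*|$ and normalization check $\nu\,|\Gamma_{F_j}|/|\Gamma_j^*|=\nu^*$.
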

\begin{proof}
   See Section~\ref{sec:proof:thm:dos}.
\end{proof}

\begin{remark}
The DoS is a bounded operator with respect to $g$ using the $\|\cdot\|_\infty$ norm. Further, $\wH(q)$ is bounded. Therefore these operators can be extended to continuous functions.
\end{remark}

\end{section}

\begin{section}{Momentum Space Convergence}
\label{sec:approx}
We now turn to the development of an approximation algorithm that exploits
the momentum space formulation \eqref{eq:dos_ms} of the density of states.
We will also discuss advantages of the momentum space algorithm over the
real space algorithm from \cite{massatt2017}.


\subsection{Motivation}
\label{sec:approx:motivation}
To begin, we describe a naive approach to approximating the formula \eqref{eq:dos_ms}.
We define a subset
\begin{equation*}
\Omega^*_r =\{R^*\alpha\in\Omega^* : |R^*|\leq r\}
\end{equation*}
of momentum space and the associated Hamiltonian restrictions $\wH_r(q) = \wH(q)|_{\Omega^*_r}$.
The DoS operator is then approximated by
\begin{equation} \label{eq:approx:dos_naive}
\widehat{\D}_r(g) = \nu^* \sum_{j=1}^2\sum_{\alpha \in \A_j} \int_{\Gamma_j^*} [g \circ \wH_r(q)]_{0\alpha,0\alpha}dq.
\end{equation}
To evaluate the DoS at an energy $E,$ we test with the Gaussian $\phi_{\epsilon \kappa}(E) = \frac{1}{\sqrt{2\pi}\kappa}e^{-\frac{(E-\epsilon)^2}{2\kappa^2}}.$ Following the argument of Theorem 2.5 of \cite{massatt2017} essentially verbatim, we then obtain the error bound 
\begin{equation*}
   |\D[H](\phi_{\epsilon \kappa}) - \widehat{\D}_r(\phi_{\epsilon \kappa})|
   \lesssim \kappa^{-7} e^{-\gamma \kappa r}.
\end{equation*}
This error bound, although exponential in $r$, has an undesirable dependence
on $\kappa$, most crucially in the exponent. In particular, the approximation
\eqref{eq:approx:dos_naive} gives us no advantage over the real space method.

We will now discuss why, for specific incommensurate systems and
values of $\epsilon$, which are of physical interest, we can construct an
approximation with a significantly improved error bound
\begin{equation}
\label{e:convergence_advantage}
|\D[H](\phi_{\epsilon \kappa}) - \widehat{\D}_{j\epsilon r}(\phi_{\epsilon\kappa})| \lesssim \kappa^{-3} e^{-\gamma r}.
\end{equation}
Here $\gamma$ is dependent on the exponential decay rate of $\wH(q)$. We note particularly that the exponent is independent of the variance of the Gaussian, which is typically small.

This approximation is not only dependent on the finite cut-off $r$, but on the energy $\epsilon$. Typically we are interested in a range of energies $V \subset \mathbb{R}$. For the analysis we first consider a single energy value $\epsilon \in \R$, but extend our results to a region $V$ in Section \ref{subsec:computation}. The formulation of the approximation is nearly identical to \eqref{eq:approx:dos_naive}:
\begin{equation*}
\widehat{\D}_{j\epsilon r}(g) = \nu^* \sum_{j=1}^2 \sum_{\alpha \in \A_j} \int_{\Gamma_j^*} [g \circ \wH_{j\epsilon r}(q)]_{0\alpha,0\alpha}dq,
\end{equation*}
where $\widehat{H}_{j\epsilon r}(q)$ now denotes the Hamiltonian $\wH(q)$ projected onto a more carefully chosen degree of freedom space  $\Omega_{j\epsilon r}^*(q)$. In particular, allowing dependence of $\Omega_{j\epsilon r}^*(q)$ on $\epsilon$ will turn out to be important. The new degree of freedom space
$\Omega_{j\epsilon r}^*(q)$ is no longer a circular cut-out; instead, we will
choose a more complex  core region and then include a finite cut-off radius surrounding it.


We conclude this motivation by remarking that,
if $   \Omega_{j\epsilon r' }^*(q) \subset  \Omega_r^* $ for some $r' < r$, then we will also have
\begin{equation*}
|\D[H](\phi_{\epsilon \kappa}) - \widehat{D}_r(\phi_{\epsilon \kappa})| \lesssim \kappa^{-3} e^{-\gamma r'}.
\end{equation*}
However, the approximation $\widehat{D}_{j\epsilon r}$ is more efficient in the sense that it significantly reduces the number of degrees of freedom required to obtain this error bound. When we consider an energy range $\epsilon \in V \subset \mathbb{R}$ as in Section \ref{subsec:computation}, we pick a more practical region $\Omega^*_{Vr} = \cup_j\cup_{\epsilon  \in V} \Omega^*_{j\epsilon r}$ that is dependent on all of $V$. Here $V$ is typically a small energy interval so that $\Omega_{j\epsilon r}^*$ does not change drastically as a function of $\epsilon \in V$.

\subsection{Construction of $\Omega^*_{j\epsilon r}$}
\label{sec:approx:region}
Next we discuss how to build the subsets of $\Omega^*$. We let $2\pi A_j^{-T} = (a_{1j}^*,a_{2j}^*)$. We pick the row vectors $a_{1j}^*$ and $a_{2j}^*$ such that the angle between them is acute.
\begin{assumption} \label{as:acute}
We assume $\theta := 2\pi \|A_1^{-T} - A_2^{-T}\|_2 \ll 1$.
\end{assumption}
We observe that the intralayer energy of $\widehat{H}(q)$ is block diagonal, and the difference $\|\epsilon I - \G_jh(q)\|_2$ varies continuously in $q$. Here $\G_j h_{\alpha\alpha'}(q)$ is the intralayer coupling, which is defined only for $\alpha,\alpha' \in \A_j$.  We let
\begin{equation*}
Q(\epsilon,q) :=\eta\max_{j \in \{1,2\}}\|\epsilon I - \G_j h(q)\|_2^{-1}.
\end{equation*}
Here
\begin{equation*}
\eta = \sup_{\psi \in \ell^2(\Omega^*)}  \frac{1}{\|\psi^{(1)}\|_2\|\psi^{(2)}\|_2} (\psi^{(1)},0)\widehat{H}(q) \begin{pmatrix} 0\\ \psi^{(2)}\end{pmatrix}
\end{equation*}
gives the strength of the interlayer interaction.
As in the real space method, we will take finite approximations of the matrix $\widehat{H}(q)$. However, the choice of the cut-out will not be a simple circular region as in the real space method, but instead will be dependent on $Q$. In particular, blocks of $\widehat{H}(q)$ where $Q(\epsilon,q) < 1$ contribute less to the DoS, especially large ``connected'' blocks of $\widehat{H}(q)$ satisfying $Q(\epsilon,\cdot) < 1$.
On the other hand, regions where $Q(\epsilon,\cdot) > 1$ contribute strongly, so we take cut-out regions around connected blocks of degrees of freedom where $Q(\epsilon,\cdot) > 1$. If $\eta = 0$, all sites $R^*\alpha,R'^*\alpha'$ with $R^* \neq R'^*$ are decoupled.

We need to make sense of connectedness over $\Omega^*$. To this end, we define (See Figure \ref{fig:mu_theta})
\begin{equation*}
   \mu_\theta := \frac{1}{2}(\underline{a}^*+\bar{a}^*)
   \qquad \text{where} \quad
   \underline{a}^* := \max\{ |a_{kj}^*|\} \quad \text{and} \quad
   \bar{a}^* := \max_j\{|a_{1j}^*+a_{2j}^*|\}.
\end{equation*}

\begin{figure}[ht]
\centering
\includegraphics[width=.4\linewidth]{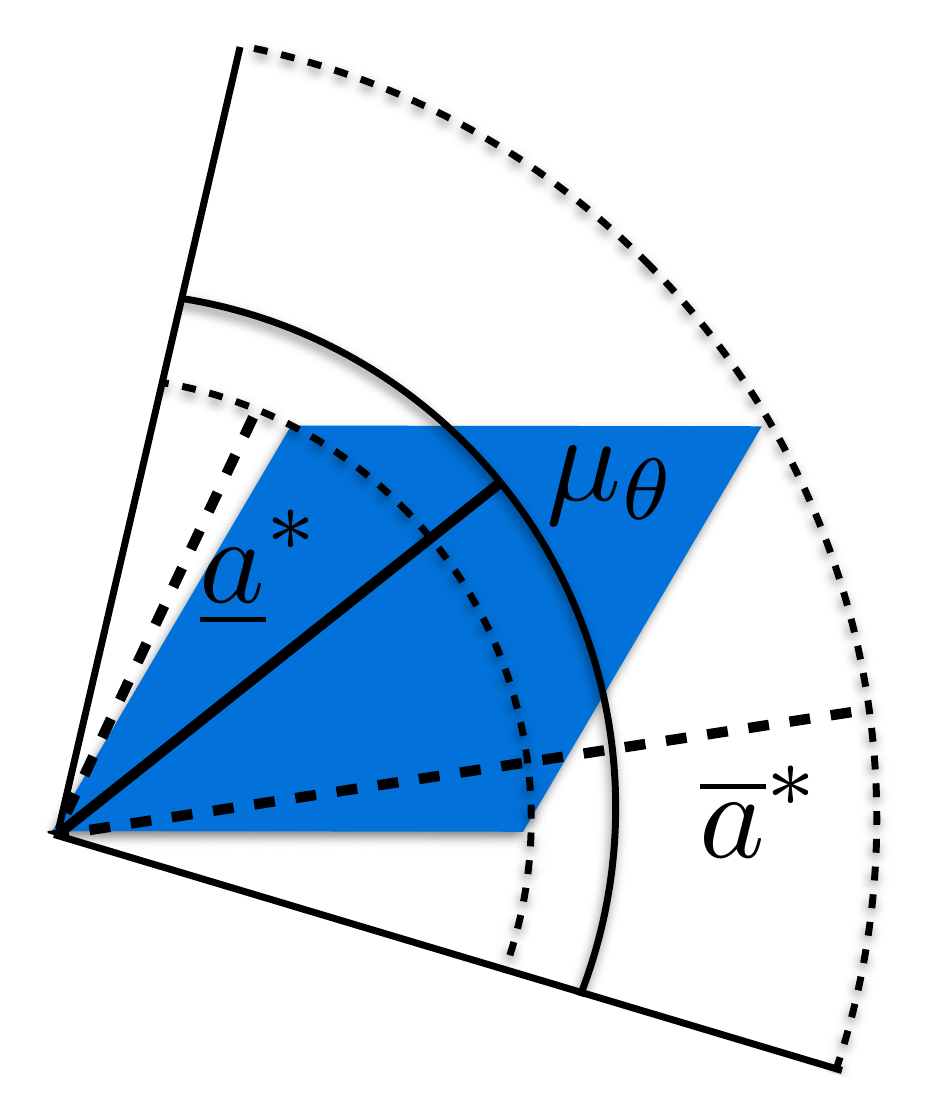}
\caption{ Definition of $\mu_\theta$. }
\label{fig:mu_theta}
\end{figure}

We define a semi-norm on $\Omega^*$
\begin{equation*}
\rho(R^*\alpha,R'^{*}\alpha') = \|R^*-R'^{*}\|_2,
\end{equation*}
and a set of paths from $\omega = R^*\alpha$ to $\omega' = R'^{*}\alpha' \in \Omega^*$ on $U\subset \Omega^*$ by
\begin{equation*}
\mathcal{P}_{\omega\omega'}(U) = \{ \{\omega_j\}_{j=1}^n \subset U \text{ : } \omega_1 = \omega, \omega_n = \omega', n > 0, \text{ and } \rho(\omega_j,\omega_{j+1}) < \mu_\theta\}.
\end{equation*}
\begin{definition}
A set $U \subset \Omega^*$ is called {\em connected} if $\forall \omega,\,\omega' \in U$, $\mathcal{P}_{\omega\omega'}(U) \neq \emptyset$.
\end{definition}

Next, we discuss how to map subsets of a unit cell to $\Omega^*$ and how connectedness on the two regions is related. This description will allow us to determine efficient cut-out and error bounds of the method from the monolayer band structure.

We define $\mathcal{B}(\Omega^*)$ as the set of subsets of $\Omega^*$ and $\mathcal{B}(\Gamma_j^*)$ as the set of Borel sets over $\mathbb{T}(\Gamma_j^*)$. Next we define a mapping $ \lambda_j : \mathcal{B}(\Gamma_j^*)  \rightarrow \mathcal{B}(\Omega^*)$ by
\begin{equation*}
\lambda_j(O) = \{ R^*\alpha \in \Omega^* \text{ : } \nmod_k(R^*) \in A_k^{-T}A_j^{T}O \text{ if } R^*\alpha \in \Omega_k^*,\text{ } k \in \{1,2\}\},
\end{equation*}
where we note that $A_k^{-T}A_j^T : \Gamma_j^*\rightarrow\Gamma_k^*$ is a natural transformation between the two domains and where $\nmod_k(y) = y+R^* \in \Gamma_k^*$ for appropriate $R^* \in \R^*_k$. The maps $\lambda_j$ preserve disjointness. That is, if $O_1 \cap O_2 = \emptyset,$ then $\lambda_j(O_1\cup O_2) = \lambda_j(O_1)\cup\lambda_j(O_2)$ and $\lambda_j(O_1) \cap \lambda_j(O_2) = \emptyset$. Further, if $U \subset \Omega^*$ is connected, then $\nmod_j(\lambda_j^{-1}(U)+B_\theta(0))$ is connected on $\mathbb{T}(\Gamma_j^*)$. This means $\lambda_j$ maps connected sets to a collection of corresponding connected sets in $\Omega^*$.
Hence, $\lambda_j$ is a natural way to map $\mathbb{T}(\Gamma_j^*)$ to $\Omega^*$.

Recall that regions where $Q(\epsilon,\cdot) < 1$ contribute weakly to the DoS while regions where $Q(\epsilon,\cdot) > 1$ contribute strongly. Therefore, we wish to include these regions in our matrix. This motivates the definition
\begin{equation*}
O_{\epsilon,r} = \{q \in \Gamma_j^*,\, j \in \{1,2\} \text{ : } \exists q_0 \in \Gamma_j^* \text{ such that }Q(\epsilon,q_0) > \beta \text{ and } |q-q_0| \leq r \theta\},
\end{equation*}
where $0 < \beta <1$ is chosen such that $\theta \ll \beta^{-1}-1$. Here $\theta$ is the parameter from Assumption~\ref{as:acute}.   For notational simplicity, we don't index $O_{\epsilon,r}$ with $\beta$.

Note in particular that
\begin{equation*}
O_{\epsilon,0} = \cup_{E \in \{\epsilon - \eta\beta^{-1}, \epsilon+\eta\beta^{-1}\}} p(E)
\end{equation*}
where
\begin{equation*}
p(E) = \{ q \in \Gamma_j^*, j \in \{1,2\} :  E = \epsilon_{qn}^j \text{ for some }n \in \{1,\cdots, |\A_j|\} \text{ for some } j\}.
\end{equation*}
Here $\epsilon_{qn}^j$ are the eigenvalues of $\G_jh(q)$. Hence $(O_{\epsilon,0},p,[\epsilon - \eta\beta^{-1}, \epsilon+\eta\beta^{-1}])$ forms a bundle of the level sets of the monolayer band structures. As will be discussed in Remark \ref{remark:criteria}, the homotopy of the total space of the bundle as seen living in one of the tori $\Gamma_j^*$ will determine when this method gains over the real space method.

In Figures \ref{fig:set_plot} and \ref{fig:set_plot2}, we visualize $O_{\epsilon,0}$ and $\lambda_1(O_{\epsilon,0})$ for a graphene bilayer for two different values of $\epsilon$. We can see for $\epsilon = 1.5$ that $\lambda_1$ maps $O_{\epsilon,0}$ to isolated finite regions in $\Omega^*$ while for $\epsilon=2$ $\lambda_1(O_{\epsilon,0})$ is connected on $\Omega^*$.
\begin{figure}[ht]
\centering
\begin{subfigure}{.45\textwidth}
\centering
\includegraphics[width=1\linewidth]{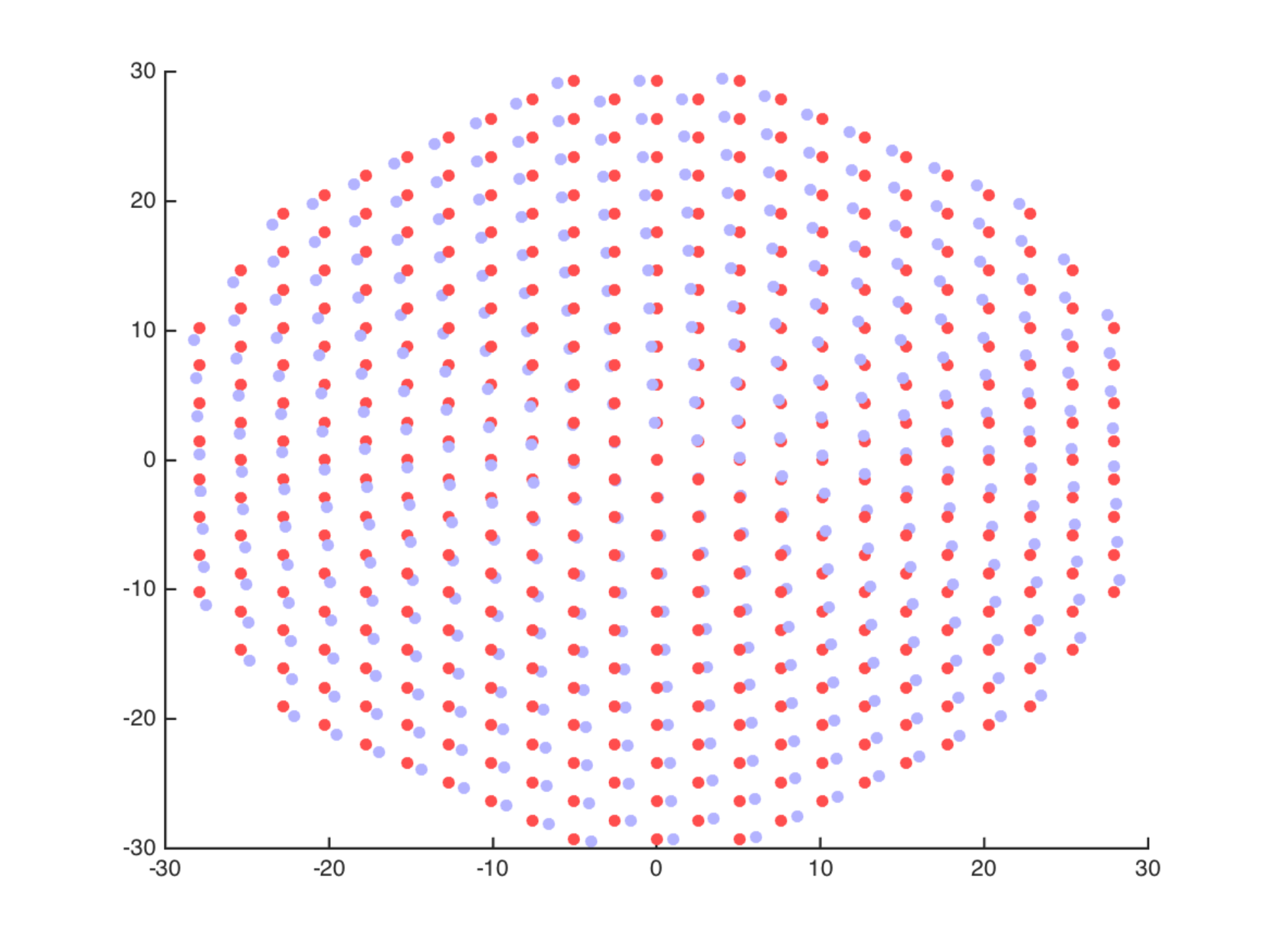}
\caption{We plot $R^* \in \R_1^*\cup\R_2^*$ satisfying $R^*\alpha \in \Omega_{1\epsilon r}^*(q)$ for any $\alpha$ and some $r>0$. Here $q$ is centered at the Dirac point.}
\vspace{13mm}
\end{subfigure}
\hspace{2mm}
\begin{subfigure}{.45\textwidth}
\centering
\includegraphics[width=1\linewidth]{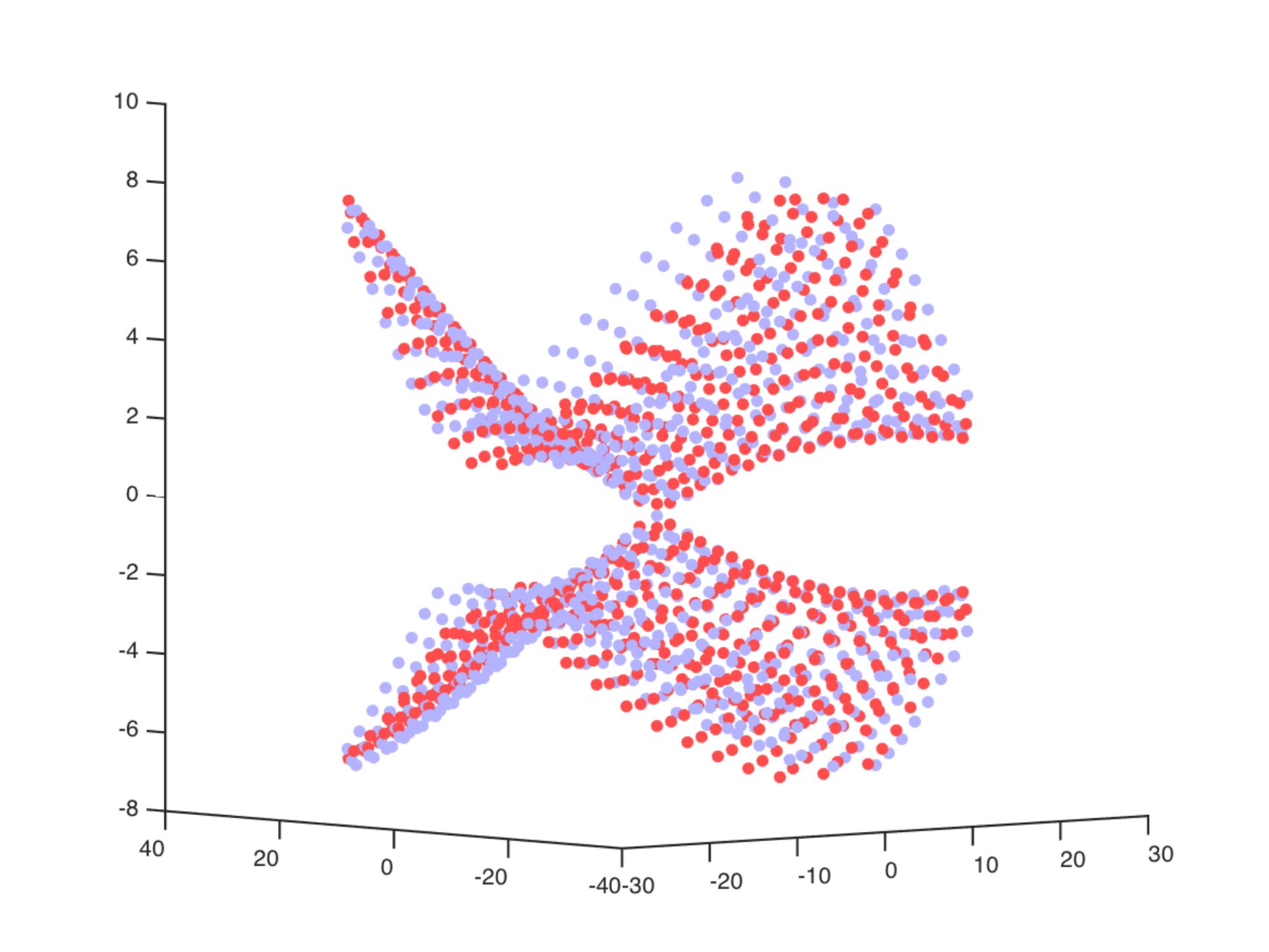}
\caption{ Each degree of freedom $R^*\alpha \in \Omega^*_{1\epsilon r}(q)$ has corresponding intralayer block $\G h^q(R^*)$, where this block is an $|\A_j| \times |\A_j|$ matrix if $\alpha \in \A_j$. Here we plot the eigenvalues of each block against the lattice position $R^*$ for $R^*$ as in Figure A.}
\end{subfigure}
\caption{Visualisation of a local matrix for graphene bilayer with a twist angle of $2^\circ$.}
\label{fig:matrix_rep}
\end{figure}
\begin{figure}[ht]
\centering
\begin{subfigure}{.4\textwidth}
\includegraphics[width=.8\linewidth]{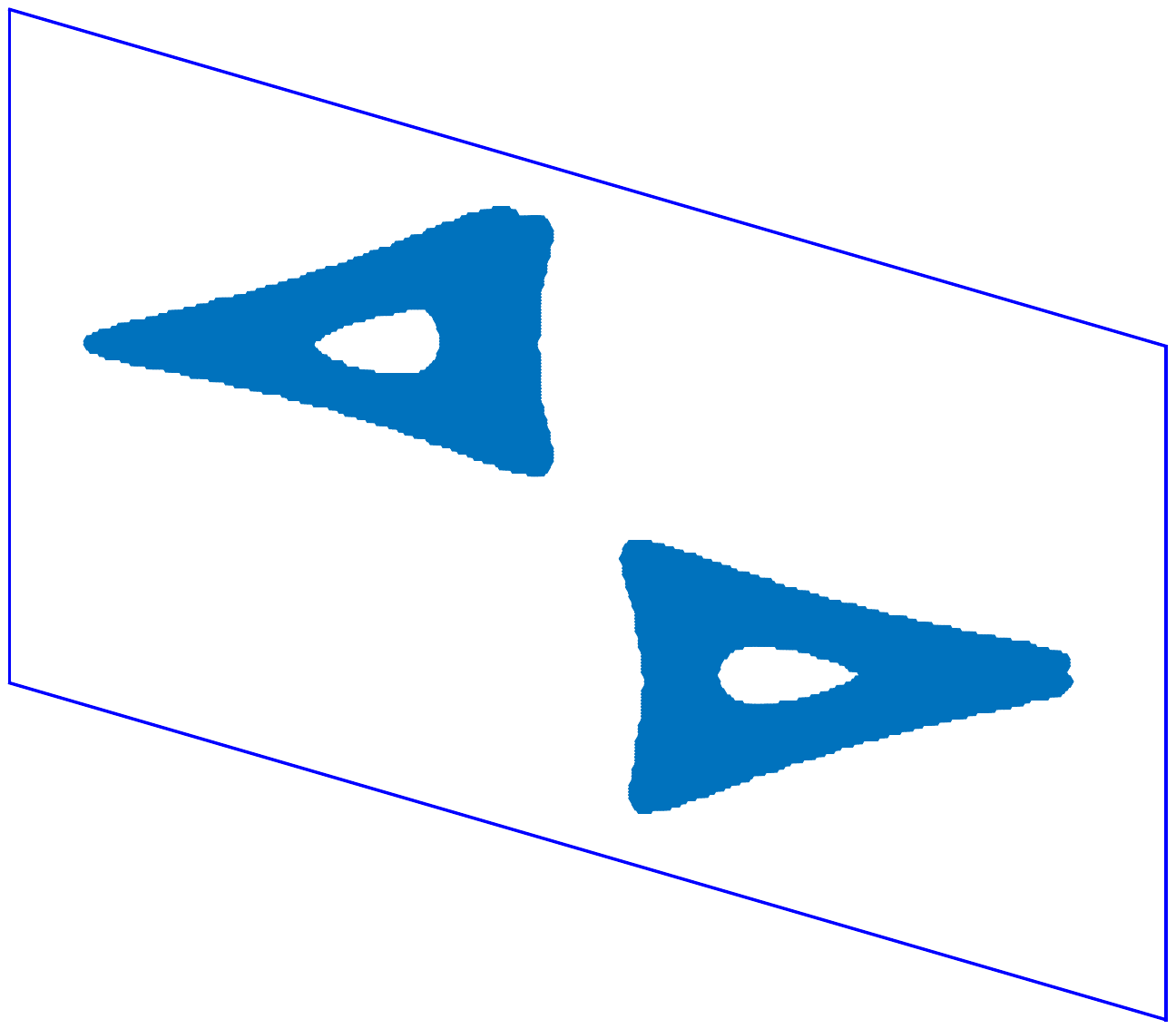}
\caption{We plot $O_{\epsilon,0}\subset \Gamma_1^*$.}
\end{subfigure}
\begin{subfigure}{.4\textwidth}
\vspace{4mm}
\includegraphics[width=.8\linewidth]{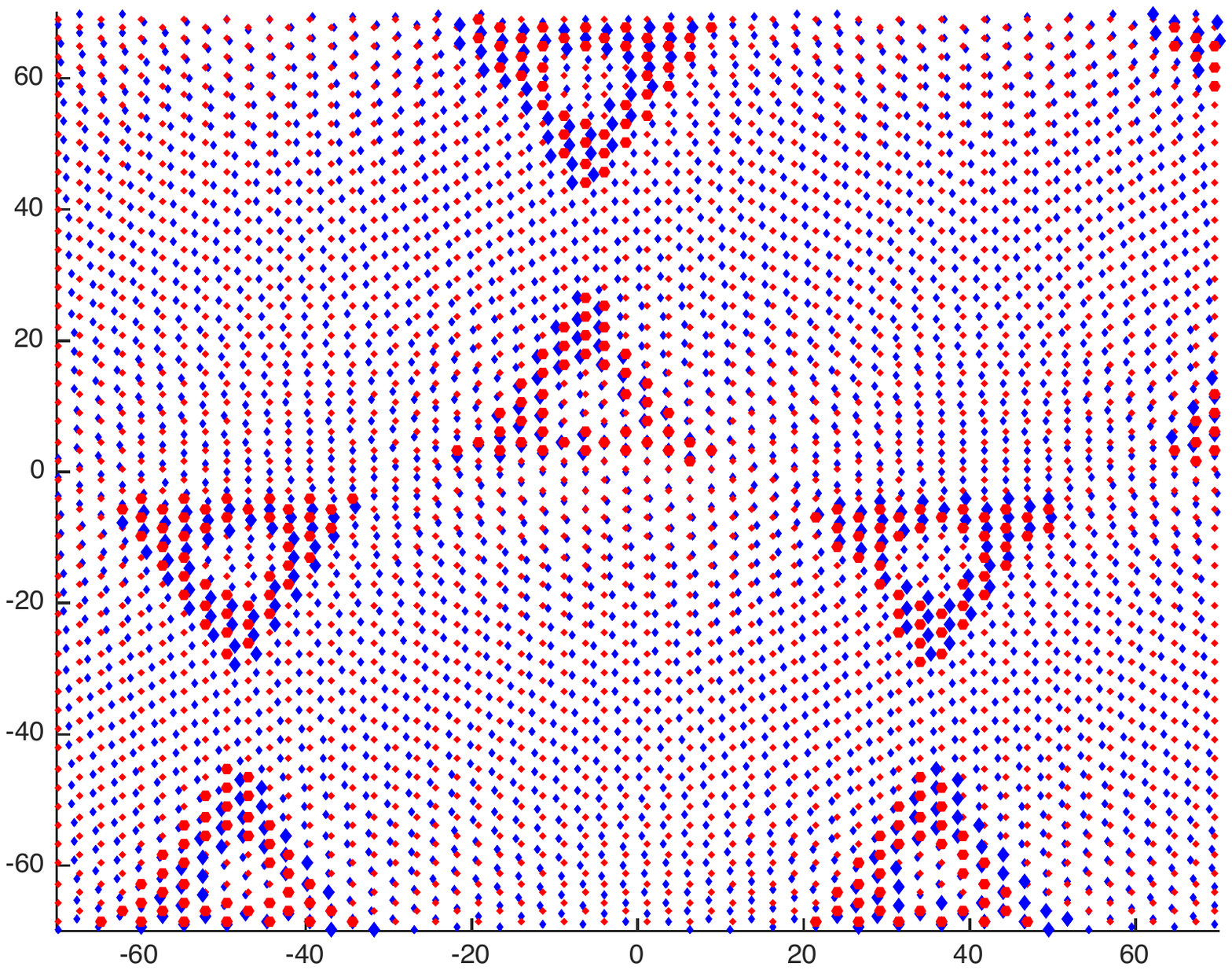}
\caption{We plot $\lambda_1(O_{\epsilon,0}) \subset \Omega^*$.}
\end{subfigure}
\caption{Bilayer graphene with a $2^\circ$ twist with $\epsilon=1.5$.
}
\label{fig:set_plot}
\end{figure}
\begin{figure}[ht]
\centering
\begin{subfigure}{.4\textwidth}
\includegraphics[width=.8\linewidth]{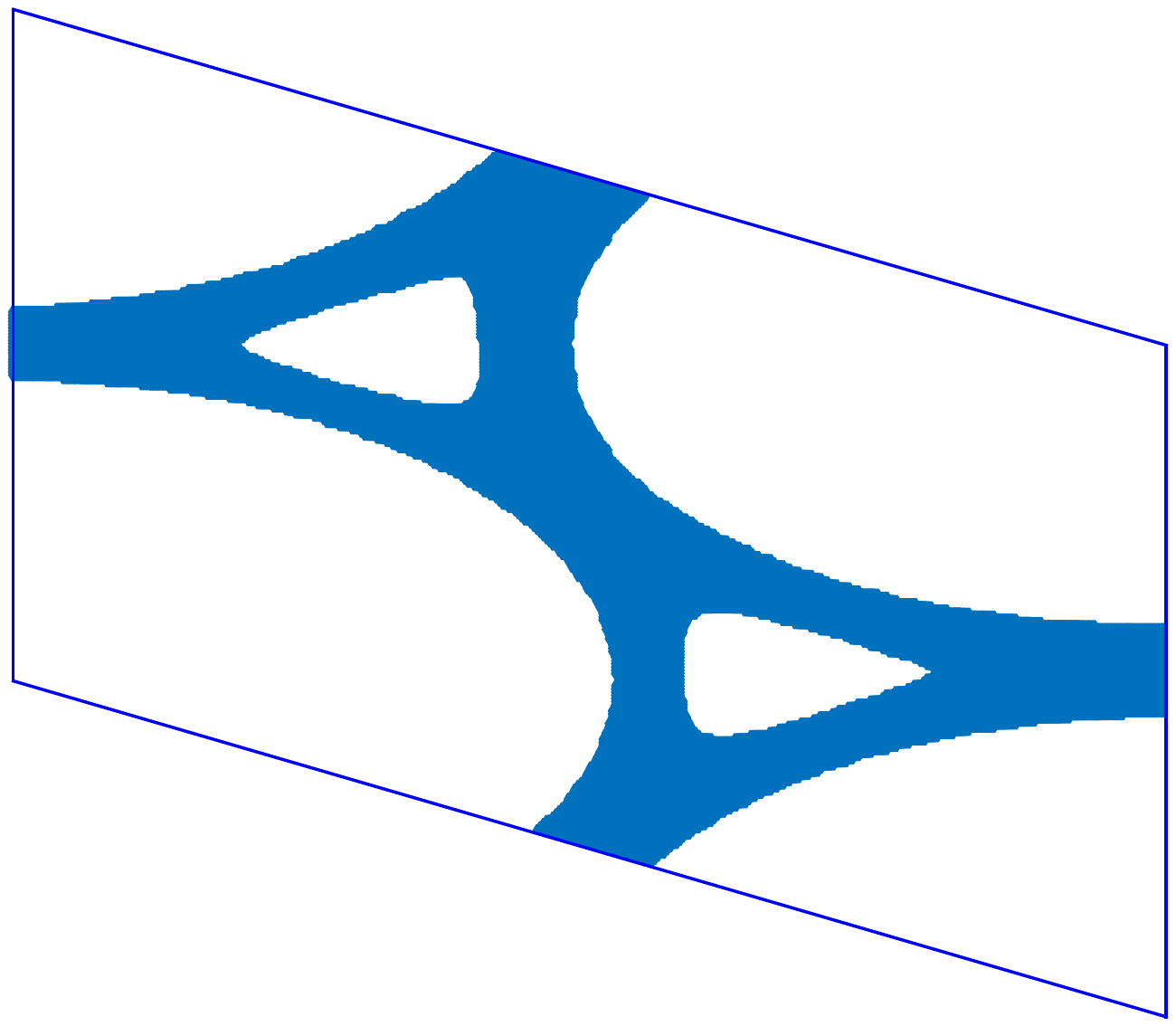}
\caption{We plot $O_{\epsilon,0}\subset \Gamma_1^*$.}
\end{subfigure}
\begin{subfigure}{.4\textwidth}
\vspace{4mm}
\includegraphics[width=.8\linewidth]{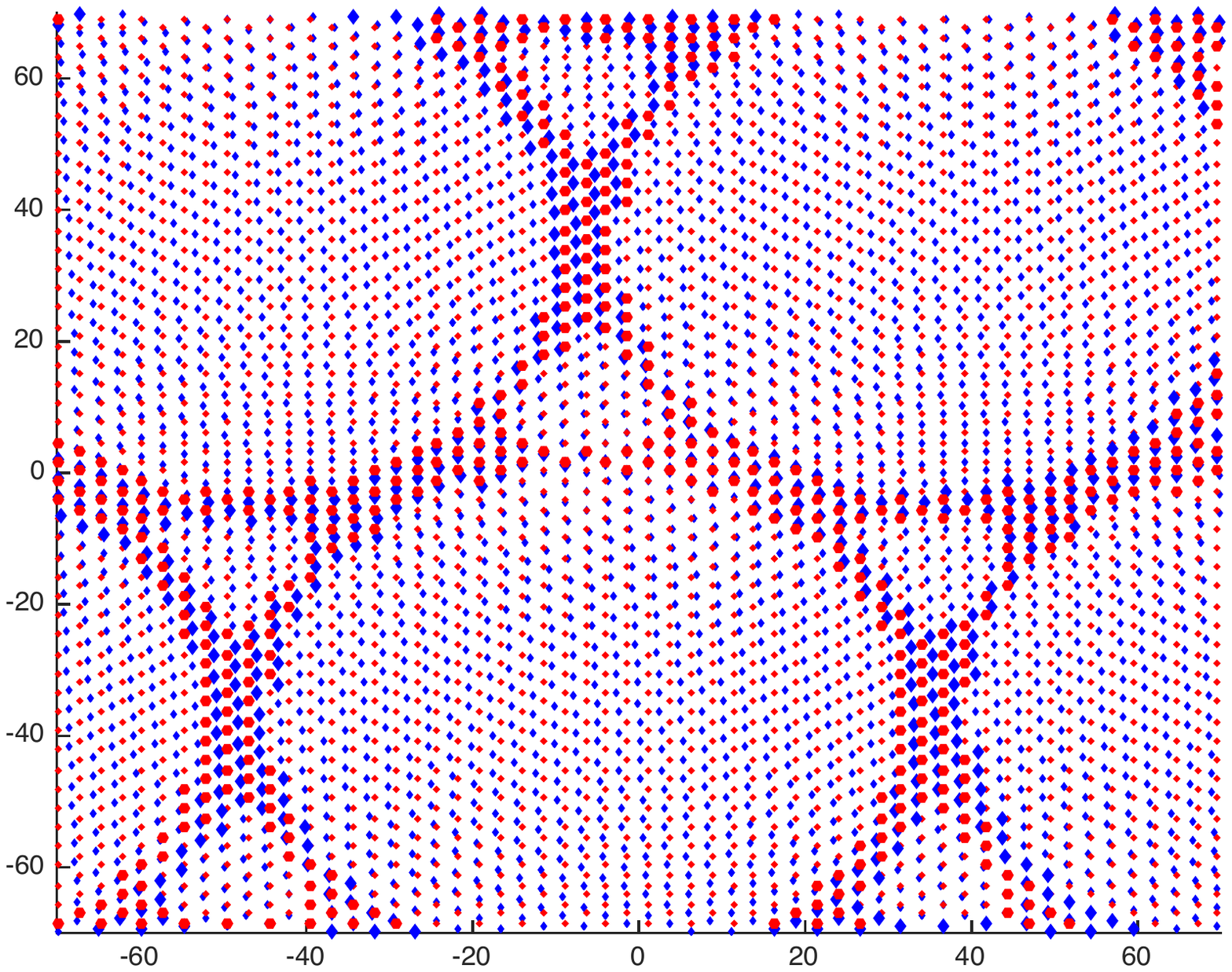}
\caption{We plot $\lambda_1(O_{\epsilon,0}) \subset \Omega^*$.}
\end{subfigure}
\caption{Bilayer graphene with a $2^\circ$ twist with $\epsilon=2$.}
\label{fig:set_plot2}
\end{figure}
\begin{figure}[ht]
\centering
\begin{subfigure}{.4\textwidth}
\includegraphics[width=.8\linewidth]{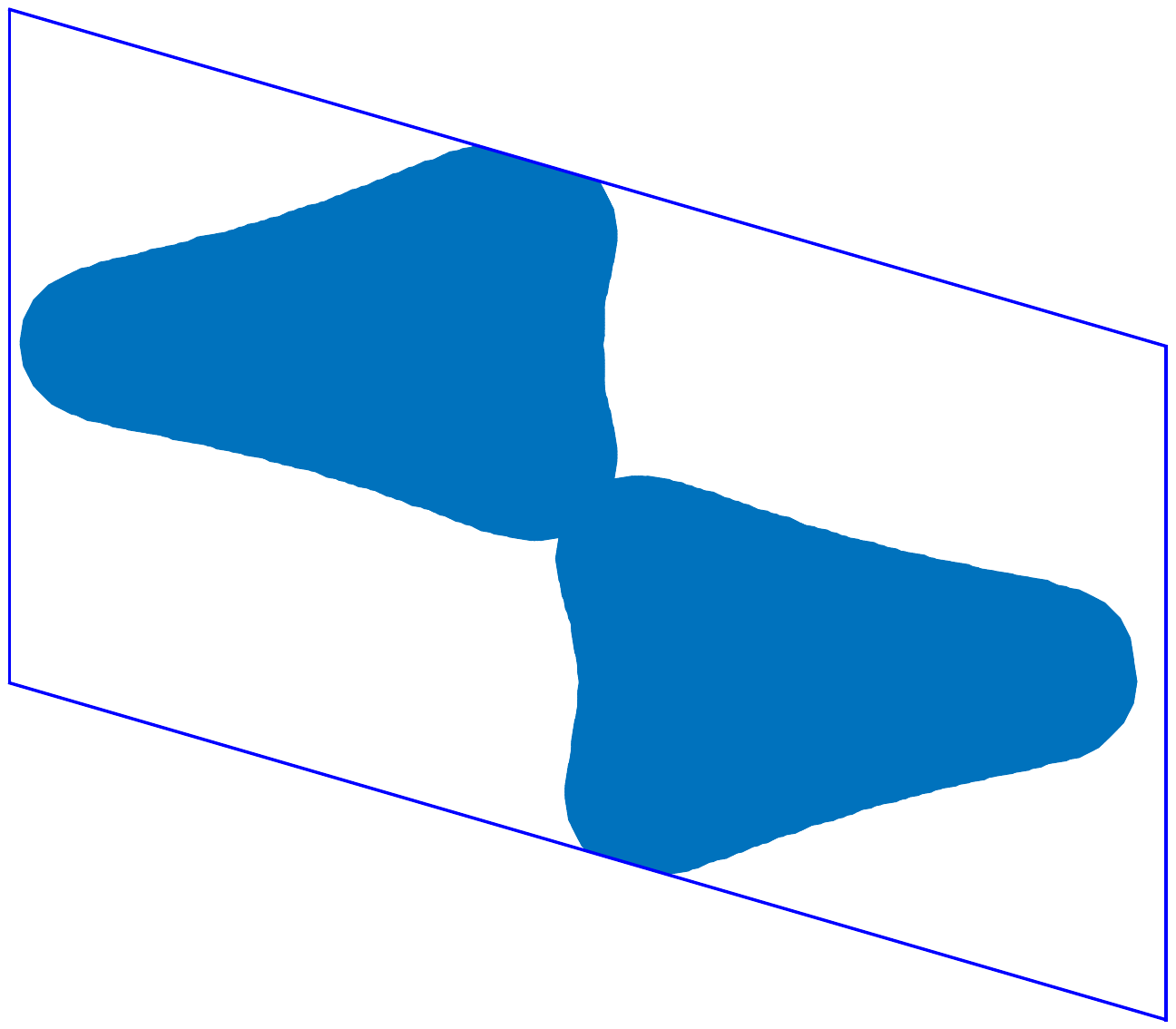}
\caption{We plot $O_{\epsilon,r}\subset \Gamma_1^*$.}
\end{subfigure}
\begin{subfigure}{.4\textwidth}
\vspace{4mm}
\includegraphics[width=.8\linewidth]{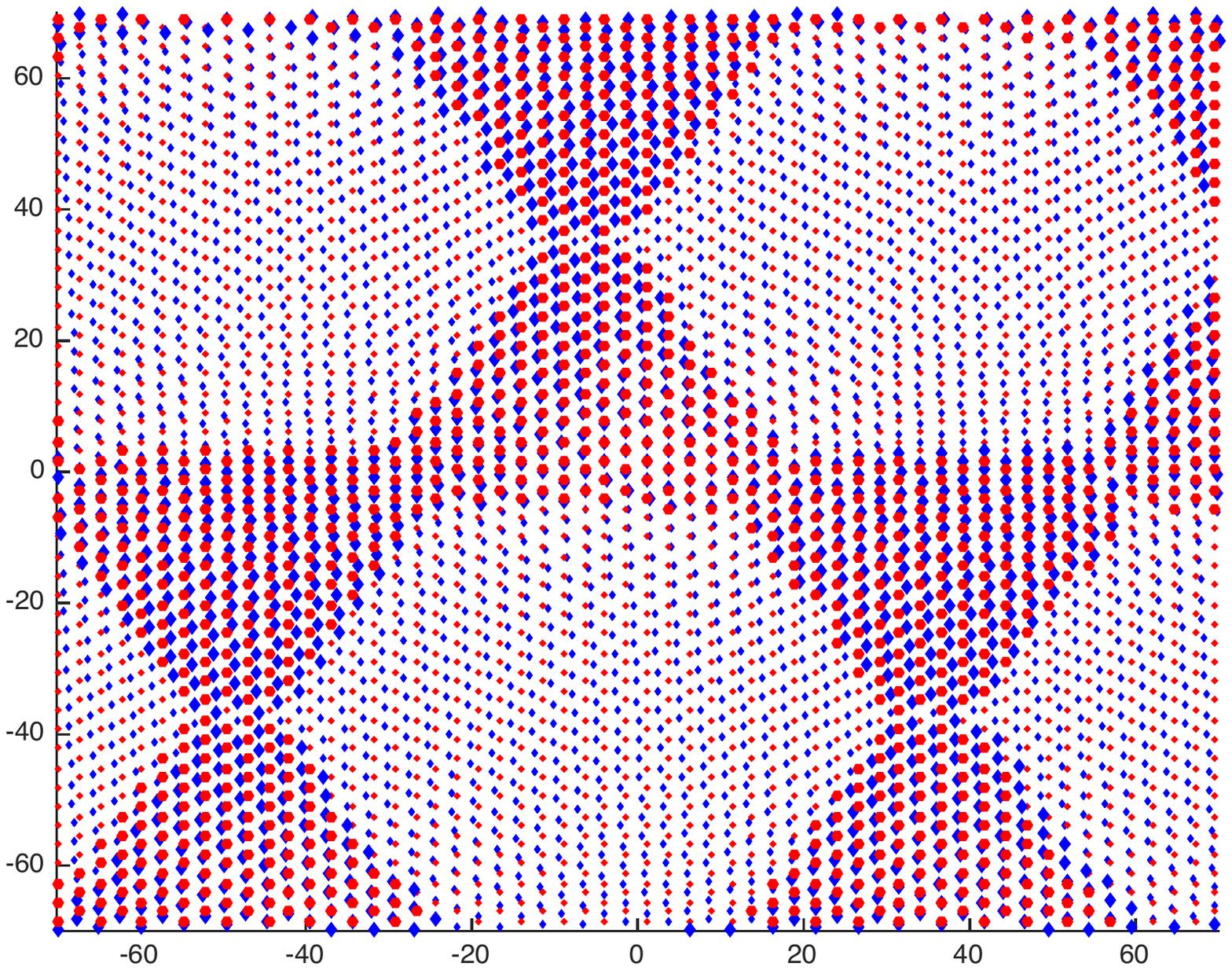}
\caption{We plot $\lambda_1(O_{\epsilon,r}) \subset \Omega^*$.}
\end{subfigure}
\caption{Bilayer graphene with a $2^\circ$ twist with $\epsilon=1.5$ and $r > 0$.}
\label{fig:set_plot3}
\end{figure}

We next define for $U \subset \Omega^*$
\begin{equation*}
\mathcal{P}[U] = \{ \omega \in U \text{ : } \mathcal{P}_{\omega,0\alpha}(U) \neq \emptyset \text{ for any }\alpha  \in \A_j\}.
\end{equation*}
Then we can finally define the degree of freedom space and associated sub-hamiltonian for our approximation scheme as
\begin{align*}
   \Omega^*_{j\epsilon r}(q) &:= \mathcal{P}[\lambda_j(q+O_{\epsilon,r})]
   \qquad \text{and} \\
   \widehat{H}_{j\epsilon r}(q) &:= \widehat{H}(q)|_{\Omega^*_{j\epsilon r}(q)}
\end{align*}
Note that $\Omega^*_{j\epsilon r}(q)$ can be empty, in which case we interpret $[g \circ \widehat{H}_{j\epsilon r}(q)]_{0\alpha,0\alpha} \equiv 0$.

\begin{remark}
\label{remark:criteria}
Suppose for all $q \in O_{\epsilon,r}$ there is a $q' \in O_{\epsilon,r}$ such that $q \in B_\theta(q') \subset O_{\epsilon, r}$. In this case, if $O_{\epsilon,r}$ has non-trivial homotopy group as seen living in one of the tori $\Gamma_j^*$,  $\widehat{H}_{j\epsilon r}(q)$ is an infinite matrix for $q \in O_{\epsilon,r}$. As an example, see Figures~\ref{fig:set_plot2} and~\ref{fig:set_plot3}.

In the first case, we have an infinite matrix, so the approximate matrix is not numerically tractable. In such cases we can use the methodology in \cite{massatt2017} to solve the momentum system with circular cut-out regions, though we gain none of the momentum space convergence advantages of \ref{e:convergence_advantage}. In Figure~\ref{fig:set_plot3}, we reached an infinite system because we took the cut-off radius $r$ to be too large. This simply shows we have a maximum $r>0$ we can choose while keeping \eqref{e:convergence_advantage} numerically tractable. However, in the case where $O_{\epsilon,0}$ has non-trivial homotopy we cannot take advantage of the strong error analysis, and this method loses its advantage. Hence when the total space of the level set bundle $(O_{\epsilon,0},p,[\epsilon-\eta\beta^{-1},\epsilon+\eta\beta^{-1}])$ has non-trivial homotopy, the method does not gain an advantage over the real space method.
\end{remark}

The resulting approximation scheme for the DOS is
\begin{equation*}
\widehat{\D}_{\epsilon r}[H](g) = \nu^*\sum_{j=1}^2\sum_{\alpha \in \A_j} \int_{\Gamma_j^*}\bigl[g\circ \widehat{H}_{j\epsilon r}(q)\bigr]_{0\alpha,0\alpha}  dq.
\end{equation*}
Recall the Gaussian $\phi_{\epsilon\kappa}(x) = \frac{1}{\sqrt{2\pi}\kappa}e^{-\frac{(x-\epsilon)^2}{2\kappa^2}}$, then we have the following theorem:
\begin{thm}
\label{thm:approx}
Given Assumptions \ref{assump:incomm}, \ref{assump:decay}, and \ref{as:acute}, {there exists $0 < \beta_0 < 1$ and $\kappa_0 > 0$ such that if $\beta < \beta_0$ and $\kappa < \kappa_0$, there exists $\gamma > 0$ dependent on $\beta$ and $\hat{h}$ such that}
   \begin{equation*}
   \bigl| \D[H](\phi_{\epsilon\kappa}) - \widehat{\D}_{\epsilon r}[H](\phi_{\epsilon\kappa}) \bigr| \lesssim \kappa^{-3} e^{-\gamma r}.
      \end{equation*}
\end{thm}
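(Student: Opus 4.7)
The plan is to express the pointwise-in-$q$ error $[\phi_{\epsilon\kappa}(\widehat H(q))]_{0\alpha,0\alpha} - [\phi_{\epsilon\kappa}(\widehat H_{j\epsilon r}(q))]_{0\alpha,0\alpha}$ via a Cauchy integral of the resolvent on a contour of height $\sim\kappa$, and then bound the resolvent difference entry-wise via a Schur complement argument that exploits the block-diagonal intralayer structure of $\widehat H(q)$ on $U^c := \Omega^*\setminus\Omega^*_{j\epsilon r}(q)$. The construction $Q(\epsilon,\cdot) > \beta$ on the active region means the intralayer blocks $[\G h^q]_{R^*R^*}$ are bounded away from $\epsilon$ by a $\kappa$-independent amount on $U^c$, which is exactly what produces the $\kappa$-independent decay rate $\gamma$.

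Using Theorem \ref{thm:dos} I reduce to a uniform-in-$q$ estimate of the integrand. Write $\phi_{\epsilon\kappa}(M) = \frac{1}{2\pi i}\oint_{\mathcal C_\kappa}\phi_{\epsilon\kappa}(z)(z-M)^{-1}dz$ along a rectangular contour $\mathcal C_\kappa$ with horizontal legs at $|\mathrm{Im}(z)|=c_0\kappa$ enclosing the spectra of both $\widehat H(q)$ and $\widehat H_{j\epsilon r}(q)$, noting $|\phi_{\epsilon\kappa}(z)|\lesssim \kappa^{-1}e^{-(\mathrm{Re}(z)-\epsilon)^2/(2\kappa^2)}$ on $\mathcal C_\kappa$. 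Decompose $\widehat H(q)=\begin{pmatrix}A & B\\ B^* & D\end{pmatrix}$ with $A=\widehat H_{j\epsilon r}(q)$ on $U=\Omega^*_{j\epsilon r}(q)$, $D$ on $U^c$, and $B$ the interlayer boundary coupling; a Neumann series using the intralayer gap ($\|(\epsilon I-[\G h^q]_{R^*R^*})^{-1}\|\leq \beta/\eta$ on $U^c$) and the exponential decay of $\hat h^q$ (Assumption \ref{assump:decay}) yields $\|(z-D)^{-1}\|_{op}\leq C(\beta)$ uniformly in $z\in\mathcal C_\kappa$ and in $\kappa$. The Schur complement identity then gives
\begin{equation*}
[(z-\widehat H(q))^{-1}]_{0\alpha,0\alpha} - [(z-A)^{-1}]_{0\alpha,0\alpha} = \bigl[(z-A-\Sigma(z))^{-1}\Sigma(z)(z-A)^{-1}\bigr]_{0\alpha,0\alpha},\quad \Sigma(z)=B(z-D)^{-1}B^*.
\end{equation*}

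The crucial ingredient is the entry-wise decay of $\Sigma(z)$. The key geometric fact is $\mathrm{dist}(0\alpha, U^c)\gtrsim r$ in $\Omega^*$: as $R^*$ shifts by a unit in the reciprocal lattice $\R^*_{F_k}$, its image $\nmod_k(R^*)$ shifts by only $O(\theta)$ in $\Gamma^*_k$ because $A_{F_k}^{-T}-A_k^{-T}=O(\theta)$ (Assumption \ref{as:acute}), so the thickening by $r\theta$ defining $O_{\epsilon,r}\subset\Gamma^*_j$ lifts to a thickening of order $r$ in $\Omega^*$. A weighted-$\ell^2$ Combes--Thomas estimate with weight $e^{\gamma\rho(\cdot,0\alpha)}$, combined with the exponential decay of $\hat h^q$ and the uniform bound on $(z-D)^{-1}$, yields the entry-wise bound $|[\Sigma(z)]_{0\alpha,0\alpha}|\lesssim e^{-\gamma r}$ and the quadratic-form bound $|\langle u,\Sigma(z)v\rangle|\lesssim e^{-\gamma r}\|u\|\|v\|$ for $u=(z-A-\Sigma(z))^{-*}e_{0\alpha}$, $v=(z-A)^{-1}e_{0\alpha}$. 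Together with $\|u\|,\|v\|\lesssim \kappa^{-1}$ this gives a pointwise resolvent-error bound $\lesssim \kappa^{-2}e^{-\gamma r}$, and integrating against $\phi_{\epsilon\kappa}$ with appropriate bookkeeping of the $|\mathrm{Im}(z)|^{-1}$ factors on $\mathcal C_\kappa$ (equivalently, using a Helffer--Sj\"ostrand almost-analytic extension) produces the final $\kappa^{-3}e^{-\gamma r}$.

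The hardest step is the weighted Combes--Thomas estimate for $\Sigma(z)$ and its geometric underpinning. Because the interlayer coupling $\hat h^q_{\alpha\beta}(R^*+\tilde R^*)$ involves the sum of lattice sites, large-amplitude matrix elements can connect mirror-image points, so standard nearest-neighbor Combes--Thomas does not apply directly; instead one must use the path-connectedness condition $\mathcal P[\cdot]$ restricting relevant paths to hops of size $<\mu_\theta$, and the trivial-homotopy assumption on $O_{\epsilon,0}$ (Remark \ref{remark:criteria}) to preclude mirror couplings that bypass the $\gamma r$ decay. Balancing the Neumann-series convergence on $U^c$ (requiring $\beta<\beta_0<1$) against the contour height $c_0\kappa$ (requiring $\kappa<\kappa_0$ so that $(z-D)^{-1}$ remains uniformly bounded on $\mathcal C_\kappa$) produces the restrictions on $\beta$ and $\kappa$ in the theorem.
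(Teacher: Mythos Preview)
Your overall architecture (contour at height $\sim\kappa$, resolvent identity, Schur complement) matches the paper's, but the central claim fails. You assert that a Neumann series gives $\|(z-D)^{-1}\|_{\rm op}\le C(\beta)$ uniformly in $\kappa$, using the intralayer gap $\|(\epsilon I-[\G h^q]_{R^*R^*})^{-1}\|\le\beta/\eta$ on $U^c$. But $U^c=\Omega^*\setminus\Omega^*_{j\epsilon r}(q)$ is \emph{not} the gapped region: since $\Omega^*_{j\epsilon r}(q)=\mathcal P[\lambda_j(q+O_{\epsilon,r})]$ retains only the connected component of the origin, $U^c$ contains every \emph{other} connected component of $\lambda_j(q+O_{\epsilon,r})$ --- the infinitely many disconnected copies of the level-set cluster. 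On those copies $Q(\epsilon,\cdot)>\beta$, the intralayer blocks are not gapped, and $D$ has spectrum arbitrarily close to $\epsilon$; the Neumann series diverges and $\|(z-D)^{-1}\|\sim\kappa^{-1}$. The quadratic-form bound $|\langle u,\Sigma(z)v\rangle|\lesssim e^{-\gamma r}\|u\|\|v\|$ then collapses as well, since $\|\Sigma(z)\|_{\rm op}$ is $O(1)$ (or worse) near $\partial U$, and the vectors $u=(z-A-\Sigma)^{-*}e_{0\alpha}$, $v=(z-A)^{-1}e_{0\alpha}$ have no a priori $\kappa$-independent localisation away from $\partial U$.

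The $\kappa$-independent exponential comes not from a gap in $U^c$ but from the gapped \emph{shell inside} $U$: between the core $\mathcal P[\lambda_j(q+O_{\epsilon,0})]$ and $\partial U$ there are $\sim r/\mu_\theta$ layers on which $Q\le\beta$. The paper extracts this via an onion decomposition $\{\mathcal P[\Omega^*_n]\}_n$ and a block-tridiagonal Schur complement that yields a discrete recursion $x\le\kappa^{-1}\tilde\beta(\psi^0+\psi^{2N})+Vx$ with $\|V\|\lesssim\beta<1$; iterating and using the exponential locality of $V$ (Lemma~2.2 of \cite{ChenOrtnerTB}) gives $\|J_{kn}(z\oI_n-\wH_n')^{-1}J_{n\ell}\|\lesssim\kappa^{-1}e^{-\gamma' k}$, i.e.\ $\kappa$-independent decay of the resolvent \emph{through the shell}. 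If you want to keep your Schur-complement/self-energy framing, the correct step is a weighted Combes--Thomas estimate for $(z-A)^{-1}e_{0\alpha}$ across that shell (which is exactly what the paper's lemma encodes), not a bound on $(z-D)^{-1}$.
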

\begin{remark}
In the real space method, we also had exponential convergence in $r$, but the convergence rate there was $\gamma \sim \kappa$, while here the convergence rate is independent of $\kappa$. This makes the convergence far faster. In particular, we note that for fixed $\kappa$, we have optimal cut-off radius choice $r \sim \log(\kappa^{-1})$, while in the real space method we had $r \sim \kappa^{-1}\log(\kappa^{-1})$. This allows us to use far smaller matrices (assuming $\wH_{j\epsilon r}(q)$ is finite). In practice, these matrices can easily be small enough to allow us to use full eigensolves instead of the Kernel Polynomial Method. This implies this method will likely be very useful for more complicated electronic objects such as conductivity, where the Kernel Polynomial Method is cumbersome. This will be explored in future work.
\end{remark}

\subsection{Computational Method}
\label{subsec:computation}
In practical computations, we are interested in an energy window $\J \subset \mathbb{R}$. It is therefore preferrable that the sub-hamiltonians we construct are $\epsilon$ independent, though we keep dependence on $\J$. Therefore we let $\Omega_{j\J r}^*(q) = \cup_{\epsilon \in \J}\Omega_{j\epsilon r}^*(q)$ and define the associated $\widehat{H}_{\J r}(q) = \widehat{H}(q)|_{\cup_{j=1}^2\Omega_{j\J r}^*(q)}$. Note that we have also removed the $j$-dependence by slightly increasing the degree of freedom space. This gives the approximation
\begin{equation*}
\widehat{\D}_{\J r}[H](\phi_{\epsilon\kappa}) = \nu^*\sum_{j=1}^2\sum_{\alpha \in \A_j} \int_{\Gamma_j^*}\bigl[\phi_{\epsilon\kappa}\circ \widehat{H}_{\J r}(q)\bigr]_{0\alpha,0\alpha}  dq, \hspace{3mm} \epsilon \in V.
\end{equation*}
It maintains the same error bound as in Theorem \ref{thm:approx}. Here $\J$ is typically a narrow range of energies, and thus the degree of freedom space $\Omega_{j\J r}^*$ does not become too large. For example in the case of bilayer graphene one is typically interested in a short energy interval around the dirac point.

Finally we address quadrature, but for the sake of brevity only give a brief formal discussion. We notice that the region structure $\wH_{\J r}(q)$ is the same for many $q$-points, but centered differently. More specifically, if $R^*\alpha \in \Omega_j^* \cap \Omega^*_{\J r}(q)$, then we have
\begin{equation*}
[\phi_{\epsilon\kappa} \circ \wH_{\J r}(q+R^*)]_{0\alpha,0\alpha} = [\phi_{\epsilon\kappa} \circ \wH_{\J r}(q)]_{R^*\alpha,R^*\alpha}.
\end{equation*}
for $\epsilon \in \J$. If $O_{\epsilon,r}$ separates into $n$ segments on $\Gamma_1^*\cup \Gamma_2^*$, then we let $q_k$, $k \in \{1,\cdots, n\}$ be $q$-points in each region away from the edges of $\Gamma_1^*$ and $\Gamma_2^*$.
We assume $q_k$ is not close to the unit cell edge. Let
\begin{equation*}
\Lambda^* = \{ 2\pi (A_{1}^{-T}-A_2^{-T})\beta \text{ : } \beta \in [0,1)^2\},
\end{equation*}
then
\begin{equation*}
\nu^*\sum_{\alpha \in \A_j}\int_{\Gamma_j^*} [\phi_{\epsilon\kappa} \circ \wH_{\J r}(q)]_{0\alpha,0\alpha}dq =\nu^* \sum_{k=1}^n \int_{\Lambda^*} \Tr [\phi_{\epsilon\kappa} \circ \wH_{\J r}(q_k+q)]dq.
\end{equation*}
We thus have the approximation
\begin{equation}
\label{e:approximation}
\widehat{\D}_{\J  r}[H](\phi_{\epsilon\kappa}) = \nu^*\sum_{k=1}^n \int_{\Lambda^*}\Tr[\phi_{\epsilon\kappa}\circ \widehat{H}_{\J r}(q_k+q)\bigr]  dq,
\end{equation}
and we can now employ a standard uniform discretization of $\Lambda^*$ to evaluate the integral.

With \eqref{e:approximation} we now have a single matrix for computing multiple discretization points simultaneously. When the matrices are small, this also lends quickly to using a full eigensolver. Note that the eigenvectors are unnecessary for the calculations.

\end{section}

\begin{section}{Numerical Tests}
\label{sec:numerics}

\begin{figure}[ht]
\centering
\includegraphics[width=.85\linewidth]{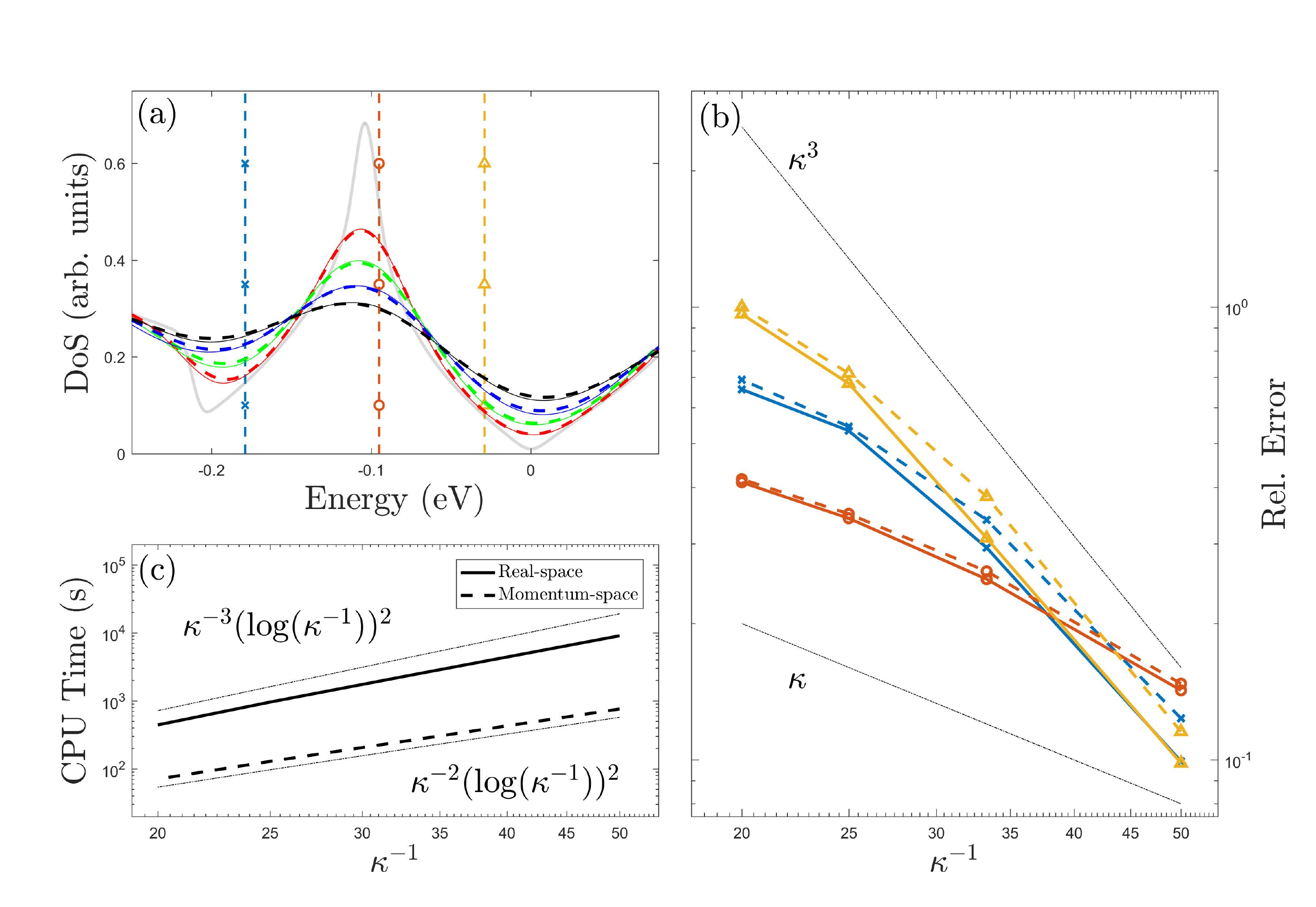}
\caption{(a) Density of states near the Fermi-level from the test calculations. In dark color (Red-Green-Blue-Black) are runs of different accuracy, indexed by the $\kappa$ introduced in the text. Momentum space methods are plotted in dashed lines and real space methods in solid lines. A ``true'' DoS is shown in light-grey, computed using a real space method with parameters chosen for much higher accuracy. Three energy values are marked by the x, circle, and triangle dashed-lines. (b) Relative error convergence towards the true DoS shown for the three marked energies. Also plotted for convenience are grey lines corresponding to $\kappa$ and $\kappa^3$ convergence. (c) Total CPU time for both methods as a function of the accuracy $\kappa$ and two grey lines with the estimated cost scaling for both methods for comparison.}
   \label{fig:numeric_tests}
\end{figure}

To test the accuracy and speed of the approach for calculating the Density of States in \eqref{e:approximation} in practise, the low-energy density of states of $3^\circ$ twisted bilayer graphene was chosen. We use the tight-binding parameters for the bilayer graphene system found in \cite{shiang2016}. The spectrum of this system is well understood, in summary it looks similar to monolayer graphene's spectrum ($\widehat{\D}(\epsilon) = v_f |\epsilon|$) but with additional singular features (Van Hove singularities) within a narrow energy window. A set of calculations of varying accuracy were run for both the real space and momentum space method. Both methods had three parameters to set in order to control accuracy and computational complexity: matrix cut-off radius ($r$), integration sampling number ($N$, giving $N^2$ total sampled points), and either KPM polynomial order ($p$) or Gaussian smoothing width ($\kappa$).

For the momentum space method (indexed by $\kappa$), we keep the cut-off radius fixed and center the lattice at one of the Dirac cones. For testing only the low-energy spectrum, a large matrix is not needed due to the steepness of the Dirac cone and the resulting fast decay of states far from the Dirac cone's center (it is also why the ``k-dot-p'' approach has been studied extensively for twisted bilayer-graphene in the physics literature \cite{bistritzer2011, kormanyos2015}). Since this matrix is very small ($76 \times 76$) a direct eigensolve is performed and each eigenvalue is smoothed by a Gaussian of width $\kappa$. The only significant scaling in computational complexity comes from growth in the number of integration points for evaluating the trace, as a very thin Gaussian width requires many eigenvalues to properly resolve the density of states.

For the real space method (indexed by $p$), the number of sampling points is kept fixed at relatively small value as the DoS is extremely smooth with respect to real space shift \cite{carr2017}. Instead the cut-off radius is increased as one increases $p$. This approach uses the Kernel Polynomial Method (KPM) and thus the Hamiltonian needed to be rescaled to ensure its entire spectrum lay in the interval [-1,1] by dividing the entire matrix by $E_b = 13 \textrm{ eV}$. The rescaled value then naturally enters into the relation between $\kappa$ and $p$. A summary of the relevant six parameters is as follows:

\begin{align*}
\kappa &= (m+1)/100, m \in [1,4] \\
r_\kappa &= 7  \\
N_\kappa &= (2/3)\kappa^{-1} \log(\kappa^{-1}) \\
p &= E_b \pi \kappa^{-1} \\
r_p &= (3/200) p \log(p) \\
N_p &= 10. \\
\end{align*}

To compute convergence rates for both methods, a true-value needed to be specified, so a real space calculation with $p = 8172$ and $r_p = 1105$ \AA$ $ fills this role. The results for these numerical tests are shown (Figure \ref{fig:numeric_tests}). Excellent agreement occurs between the real space (solid lines) and momentum space methods (dashed lines) for the four different values of $\kappa$. The convergence rate is plotted for three different energy values, and it varies between $\kappa^{-1}$ and $\kappa^{-3}$ depending on whether the derivative(s) of the density-of-states operator are zero at that energy value. We see the convergence rate is very similar for both the real space and momentum space methods, and two of the sample points are past the $\kappa_0$ that yields the $\kappa^{-3}$ convergence of Theorem \ref{thm:approx}.

The momentum space method's computational complexity scales like
\begin{equation*}
N_\kappa^2 \sim \kappa^{-2} (\log(\kappa^{-1}))^2,
\end{equation*}
as the matrix-size is kept fixed and changing $\kappa$ does not change the cost of Gaussian smoothing. The real space method scales like the size of the matrix, $r_\kappa^2$, as well as the polynomial order, $\kappa^{-1}$, giving total cost of
\begin{equation*}
r_\kappa^2 \kappa^{-1} \sim \kappa^{-3} (\log(\kappa^{-1}))^2.
\end{equation*}
 The momentum space method is not only faster but also has better asymptotic scaling when compared to a real space calculation of the same accuracy. These results validate the momentum space method and show the significant speed-up it provides over the real space approach in the bilayer graphene system.

\end{section}

\begin{section}{Conclusions}
We have derived a corresponding momentum space formulation for the real space incommensurate system. This results in an alternative numerical scheme where the convergence rate becomes strongly dependent on the moir\'e pattern and the monolayer electronic structure, in particular the band structure. It is shown that the homotopy groups of the band structure level sets determine the efficiency of this algorithm.

In particular, this method has no advantage over the real space method when the band structure level set bundle with width equal to the interlayer coupling energy admits non-trivial homotopy (Remark \ref{remark:criteria}).
However, for certain materials and energy ranges, this method converges asymptotically faster than the real space method, and promises to be very efficient for more complex electronic observables such as conductivity.

In a future work we will also use the method introduced here as an analytical tool to study the validity of the popular supercell approximation~\cite{Terrones2014,Loh2015,Ebnonnasir2014,Koda2016,Komsa2013} and prove sharper error bounds for the real space method \cite{massatt2017}.
\end{section}

\begin{section}{Proofs}
\label{sec:proofs}
\subsection{Proof of Lemma \ref{lemma:transform}}
\label{proof:transform}
Before we can prove the final result, we need to introduce additional notation for transforming between real space to momentum space, and prove a few properties of these transformations.
We define $G_q : \ell^2(\Omega) \rightarrow \ell^\infty( \Omega^*)$ by
\begin{equation*}
[G_q\psi]_{\alpha}(R^*) =
|\Gamma_j^*|^{-1/2}[\G_j\psi^k]_\alpha(q+R^*) \text{ for }R^*\alpha \in \Omega_k^*.
\end{equation*}
Recall the definition $G_q^{bj} : \ell^2(\Omega) \rightarrow \ell^\infty( \Omega^*)$ {from (\ref{e:IncomBloch}}),
\begin{equation*}
[G_q^{bj}\psi]_{\alpha}(R^*) =
|\Gamma_j^*|^{-1/2}e^{(-1)^{j+k}ib\cdot(q/2+R^*)}[\G_j\psi^k]_\alpha(q+R^*) \text{ for }R^*\alpha \in \Omega_k^*.
\end{equation*}
This transforms from real space to momentum space.
We define the projection $P_k : \ell^\infty(\Omega^*) \rightarrow \ell^\infty(\Omega^*)$ to
project onto vector components of sheet $k$. Specifically, for $\hat \psi \in
\ell^\infty(\Omega^*)$,
\begin{equation*}
   P_k\hat\psi = (\delta_{1k}\hat \psi^1,\delta_{2k}\hat\psi^2).
\end{equation*}
We define $\tilde P_j : \ell^2(\Omega) \rightarrow \ell^2(\Omega^j)$ such that $\tilde P_j\psi = \psi^j$ for $\psi \in \ell^2(\Omega)$.
Let $\tilde b = (-1)^{1+j}b$, then we apply $\G_1$ to obtain
\begin{equation*}
\G_1\circ \tilde P_1[H_j(b) \psi](q) = [\G_1h^{11}](q)[\G_1 \psi^1](q)
   + \Big\{ \G_1\Big[\sum_{R_2 \in \R_2,\alpha' \in \A_2} h_{\alpha_1\alpha_2}(R_1-R_2+\tilde b) \psi_{\alpha_2}(R_2)\Big](q)\Big\}_{\alpha_1 \in \A_1}.
\end{equation*}
Here we define $\G_1 h^{11}(q) = \G_1 h_{\alpha\alpha'}(q) |_{\alpha,\alpha' \in \A_1}.$
Next we substantially rewrite the second term on the right-hand side.
Recall that the Fourier transform $\hat{h}_{\alpha\alpha'}$ for
$\alpha,\alpha'$ in different sheets satisfies
\begin{equation*}
h_{\alpha\alpha'}(x) = \int_{\mathbb{R}^2} \hat{h}_{\alpha\alpha'} (\xi)e^{-i\xi\cdot x}d\xi.
\end{equation*}
We then have, for each $q \in \mathbb{R}^2$ and $\alpha_1 \in \A_1$,
\begin{equation*}
\begin{split}
\hspace{2cm} & \hspace{-2cm} \G_1\Big[ \sum_{R_2 \in \R_2,\alpha_2 \in \A_2} h_{\alpha_1\alpha_2}(R_1-R_2+\tilde b) \psi_{\alpha_2}(R_2)\Big](q) \\
&= \sum_{R_2 \in \R_2, R_1 \in \R_1, \alpha_2 \in \A_2} h_{\alpha_1\alpha_2}(R_1-R_2+\tilde b)  \psi_{\alpha_2}(R_2) e^{-i q \cdot R_1} \\
&=\sum_{R_2 \in \R_2, R_1 \in \R_1, \alpha_2 \in \A_2} \int_{\mathbb{R}^2} \hat{h}_{\alpha_1\alpha_2}(\xi) e^{-i\xi \cdot (R_1-R_2+\tilde b)} d\xi \, \psi_{\alpha_2}(R_2) e^{-i q \cdot R_1}\\
&=\sum_{R_1 \in \R_1, \alpha_2 \in \A_2}
      \int_{\mathbb{R}^2} \hat{h}_{\alpha_1\alpha_2}(\xi) e^{-i\xi \cdot (R_1+\tilde b)}
      \Big(\sum_{R_2 \in \R_2} e^{i \xi \cdot R_2} \psi_{\alpha_2}(R_2) \Big)
      d\xi \,  e^{-i q \cdot R_1} \\
&= \sum_{\alpha_2 \in \A_2} \int_{\mathbb{R}^2} \hat{h}_{\alpha_1\alpha_2}(\xi) \sum_{R_1 \in \R_1}e^{i (-\xi + q)\cdot R_1} e^{-i\xi\cdot \tilde b}[ \G_2\psi^2]_{\alpha_2}(\xi) \ d\xi.
\end{split}
\end{equation*}
We note that $\sum_{R_1 \in \R_1} e^{i x \cdot R_1} = |\Gamma_1^*| \sum_{R_1^* \in \K_1}\delta(x-R_1^*)$ in a distributional sense. We obtain
\begin{equation*}
\begin{split}
   & [G_q P_1H_j(b) P_2\psi]^1_{\alpha_1}(R^*_2)  \\
&=|\Gamma_1^*|^{-1/2} \G_1\Big[\sum_{R_2 \in \R_2,\alpha_2 \in \A_2} h_{\alpha_1\alpha_2}(R_1-R_2+\tilde b) \psi_{\alpha_2}(R_2)\Big](q+R^*_2) \\
&= |\Gamma_1^*|^{1/2}\sum_{\alpha_2 \in \A_2, R^*_1 \in \K_1} e^{-i(q+R^*_1+R^*_2)\cdot \tilde b}\hat{h}_{\alpha_1\alpha_2}(q+ R^*_1 +R^*_2)[\G_2\psi^2]_{\alpha_2}(q + R^*_1)\\
&=|\Gamma_1^*|^{1/2}|\Gamma_2^*|^{1/2}\sum_{\alpha_2 \in \A_2, R^*_1 \in \K_1} e^{i(q+R^*_1+R^*_2)\cdot (-1)^j b}\hat{h}_{\alpha_1\alpha_2}(q + R^*_1+R^*_2)[G_q\psi]_{\alpha_2}(R^*_1) \\
&=|\Gamma_1^*|^{1/2}|\Gamma_2^*|^{1/2}\sum_{\alpha_2 \in \A_2, R^*_1 \in \K_1}
e^{i(-1)^jq \cdot b/2} e^{iR^*_2\cdot(-1)^j b}\hat{h}_{\alpha_1\alpha_2}(q + R^*_1+R^*_2)  \\
   & \hspace{6cm} \cdot e^{i R^*_1 \cdot(-1)^jb}e^{i(-1)^jq\cdot b/2}[G_q\psi]_{\alpha_2}(R^*_1).
\end{split}
\end{equation*}
Therefore we can conclude that
\begin{equation}
\label{e:part1}
P_1G_q^{bj}H_j(b)P_2\psi=P_1\widehat{H}(q) G_q^{bj}P_2\psi.
\end{equation}

We treat the intralayer interaction analogously:
\begin{equation*}
\begin{split}
& [G_q H_j(b) P_1\psi]_\alpha^1(R^*_2)  \\
&= |\Gamma_1^*|^{-1/2} \G_1[\sum_{R_1 \in \R_1,\alpha' \in \A_1}h_{\alpha_1\alpha_2}(R-R')\psi_{R_2\alpha_2}](q+R^*_2)\\
&= |\Gamma_1^*|^{-1/2}\sum_{\alpha_1 \in \A_1} [\G_1 h]_{\alpha_1\alpha_2}(q + R^*_2) [ \G_1\psi^1]_{\alpha_2}(q+R^*_2) \\
&= |\Gamma_1^*|^{-1/2}\sum_{\alpha_1 \in \A_1} e^{i (-1)^{j}(R^*_1+ q \cdot b/2)}[\G_1 h]_{\alpha_1\alpha_2}(q + R^*_2)e^{-i (-1)^{j}(R^*_1+ q \cdot b/2)} [ \G_1\psi^1]_{\alpha'}(q+R^*_2),
\end{split}
\end{equation*}
that is,
\begin{equation}
\label{e:part2}
P_1 G_q^{bj} H_j(b)P_1\psi =P_1 \widehat{H}(q)G_q^{bj}{P_1}\psi.
\end{equation}
Combining {\eqref{e:part1} and \eqref{e:part2}}, we obtain that
\begin{equation*}
P_1 G_q^{bj} H_j(b)\psi =P_1 \widehat{H}(q)G_q^{bj}\psi.
\end{equation*}
The same argument holds for the second sheet, and thus
\begin{equation*}
G_q^{bj}(H_j(b)  \psi) = \widehat{H}(q) G_q^{bj}\psi.
\end{equation*}
This completes the proof of Lemma \ref{lemma:transform}.

\subsection{Proof of Theorem \ref{thm:dos}}
 \label{sec:proof:thm:dos}

For all $g \in \mathbb{P}$, we have
\begin{equation*}
G_q^{bj} (g\circ H_j(b)\psi) = g\circ\widehat{H}(q)G_q^{bj}\psi.
\end{equation*}
This follows trivially from Lemma \ref{lemma:transform}.
We define $e^{\alpha} \in \ell^2(\Omega)$,
i.e. $e_{\alpha'}^\alpha(R) = \delta_{R0}\delta_{\alpha\alpha'}$. Additionally we define
$\hat e^\alpha \in \ell^2(\Omega^*)$ by
\begin{equation*}
\hat e_{\alpha'}^\alpha(R^*) = \delta_{R^*0}\delta_{\alpha\alpha'}
\qquad \text{for $R^*\alpha' \in \tilde \Omega$,}
\end{equation*}
and note that for $\alpha \in \A_k$,
\begin{equation*}
G_qe^\alpha = |\Gamma_k^*|^{-1/2}I^\alpha \qquad \text{where} \quad
I^\alpha_{\alpha'}(R^*) = \delta_{\alpha\alpha'}.
\end{equation*}
We define $\Phi^{bj} : \ell^\infty(\Omega^*) \rightarrow \ell^\infty(\Omega^*)$
\begin{equation*}
[\Phi^{bj}\psi]_{\alpha}(R^*) = e^{(-1)^{j+k}ib\cdot R^*}\psi_\alpha(R^*).
\end{equation*}
Then $G_q^{bj} e^\alpha = e^{(-1)^{j+k}b\cdot q/2}\Phi^{bj}I^\alpha$.
Next, we recall the Local Density of States (LDoS) operator { \cite{massatt2017} } for $g \in \mathbb{P}$, is given by
  \begin{equation}
  \label{e:ldos}
 \D_\alpha[H](b,g) := [e^\alpha]^*\bigl( g \circ H_{j}(b)\bigr) e^\alpha, \qquad \alpha \in \A_j.
  \end{equation}

\begin{lemma}
\label{lemma:local_dos}
   Under Assumption \ref{assump:incomm} (incommensurate bilayer) and
   Assumption \ref{assump:decay} (locality of $H$)
    we have, for $\alpha \in \A_{j}$ and for $g \in \mathbb{P}$,
\begin{equation*}
\D_\alpha[H](b,g) =\mint_{\Gamma_j^*}[\hat e^\alpha]^* \bigl( g\circ \widehat{H}(q)\bigr) \Phi^{bj}I^\alpha dq.
\end{equation*}
\end{lemma}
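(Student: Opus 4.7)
\textbf{Proof plan for Lemma \ref{lemma:local_dos}.} The strategy is to rewrite $[e^\alpha]^*\!\bigl(g\circ H_j(b)\bigr)e^\alpha$ by inserting an inverse Bloch transform, commuting the Hamiltonian polynomial past the transform via the intertwining identity of Lemma~\ref{lemma:transform}, and then tracking the resulting normalization and phase factors to match the right-hand side. First, by linearity in $g$ it suffices to consider monomials $g(x)=x^n$; then iterating Lemma~\ref{lemma:transform} $n$ times yields the polynomial intertwining property
\begin{equation*}
G_q^{bj}\bigl(g\circ H_j(b)\,\psi\bigr)=\bigl(g\circ\widehat H(q)\bigr)G_q^{bj}\psi
\qquad\forall\,\psi\in\ell^2(\Omega).
\end{equation*}

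Next I would compute $G_q^{bj}e^\alpha$ directly from its definition. Since $e^\alpha\in\ell^2(\Omega)$ is supported on sheet $j$ with $\alpha\in\A_j$, we have $[\G_j e^\alpha]_{\alpha'}(q+R^*)=\delta_{\alpha\alpha'}$, and the definition~\eqref{e:IncomBloch} (with $k=j$, so $(-1)^{j+k}=+1$) gives
\begin{equation*}
G_q^{bj}e^\alpha \;=\; |\Gamma_j^*|^{-1/2}\,e^{ib\cdot q/2}\,\Phi^{bj}I^\alpha.
\end{equation*}
In parallel, I would apply the inverse Bloch transform of Section~\ref{sub:mon} to express $[e^\alpha]^*\phi = \phi_\alpha(0)$ as an averaged integral: $[e^\alpha]^*\phi = \mint_{\Gamma_j^*}[\G_j\phi^j]_\alpha(q)\,dq$. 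Using \eqref{e:IncomBloch} at $R^*=0$ to rewrite $[\G_j\phi^j]_\alpha(q)$ in terms of $[G_q^{bj}\phi]_\alpha(0)=[\hat e^\alpha]^*G_q^{bj}\phi$, one gets
\begin{equation*}
[e^\alpha]^*\phi \;=\; \mint_{\Gamma_j^*}|\Gamma_j^*|^{1/2}\,e^{-ib\cdot q/2}\,[\hat e^\alpha]^*G_q^{bj}\phi\,dq.
\end{equation*}

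Setting $\phi=g\circ H_j(b)\,e^\alpha$ and using the intertwining identity from the first step, followed by the explicit formula for $G_q^{bj}e^\alpha$, the integrand becomes
\begin{equation*}
|\Gamma_j^*|^{1/2}\,e^{-ib\cdot q/2}\cdot|\Gamma_j^*|^{-1/2}\,e^{ib\cdot q/2}\,[\hat e^\alpha]^*\bigl(g\circ\widehat H(q)\bigr)\Phi^{bj}I^\alpha,
\end{equation*}
in which the normalization factors and the half-shift phases $e^{\pm ib\cdot q/2}$ cancel exactly, leaving precisely the right-hand side of the lemma. The only nontrivial bookkeeping is in this cancellation; the rest is a direct application of the inverse Bloch transform and the already established intertwining property. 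The main subtlety to verify carefully is that the half-phase $e^{(-1)^{j+k}ib\cdot q/2}$ in the definition of $G_q^{bj}$ is precisely what makes the phases conjugate between the two uses of $G_q^{bj}$ (once to represent $[e^\alpha]^*\phi$ as an inner product on momentum space, once to compute $G_q^{bj}e^\alpha$), so that the residual operator acting on $I^\alpha$ is exactly $\Phi^{bj}$, carrying only the lattice-site phase $e^{ib\cdot R^*}$ rather than any remainder depending on $q$.
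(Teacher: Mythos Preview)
Your proposal is correct and follows essentially the same route as the paper's own proof: both arguments invoke the polynomial intertwining identity obtained by iterating Lemma~\ref{lemma:transform}, compute $G_q^{bj}e^\alpha$ explicitly, express $[e^\alpha]^*\phi$ via the inverse Bloch transform in terms of $[\hat e^\alpha]^*G_q^{bj}\phi$, and then observe the cancellation of the normalization $|\Gamma_j^*|^{\pm 1/2}$ and the half-phases $e^{\pm ib\cdot q/2}$. Your bookkeeping of the normalization and phase factors is in fact slightly more explicit than the paper's, but the argument is the same.
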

\begin{proof}
{We wish to represent the LDoS from \eqref{e:ldos}} in terms of $\widehat{H}(q)$.
We let $\psi = \bigl(g \circ H_{{j}}(b)\bigr)e^\alpha \in \ell^2(\Omega)$. If $\alpha \in \A_j$, then
\begin{equation*}
\begin{split}
[e^\alpha]^*\psi &= \mint_{\Gamma_j^*}[\G_1\psi](q)dq
= |\Gamma_j^*|^{-1/2}[\hat e^\alpha]^*\int_{\Gamma_j^*}e^{-(-1)^{j+k}b\cdot q/2} G_q^{bj}\psi,
\end{split}
\end{equation*}
which we insert into \eqref{e:ldos} to obtain
\begin{equation*}
\D_\alpha[H](b,g) = [g \circ H_j(b)]_{0\alpha,0\alpha}= |\Gamma_j^*|^{-1/2}[\hat e^\alpha]^* \int_{\Gamma_j^*}e^{-(-1)^{j+k}b\cdot q/2} \Phi^{bj}\bigl(g \circ \widehat{H}(q) \bigr)G_q^{bj} e^\alpha.
\end{equation*}
However, $G_q^{bj} e^\alpha = e^{(-1)^{j+k}b\cdot q/2}\Phi^{bj}I^\alpha$, hence
\begin{equation*}
\D_{\alpha}[H](b,g) = \mint_{\Gamma_j^*} [\hat e^\alpha]^*\bigl(g \circ \widehat{H}(q)\bigr) \Phi^{bj} I^\alpha dq,
\end{equation*}
which is the desired result.
\end{proof}

We can now prove the Theorem \ref{thm:dos}. Using Lemma \ref{lemma:local_dos}
we have
\begin{equation*}
\begin{split}
\D[H](g) & = \nu \sum_{j=1}^2\sum_{\alpha \in \A_j}\int_{\Gamma_{F_j}}\D_\alpha[H](b,g)db \\
&= \nu \sum_{j=1}^2\sum_{\alpha \in \A_j} \mint_{\Gamma_j^*} [\hat e^\alpha]^*\bigl(g \circ \widehat{H}(q)\bigr) \biggr(\int_{\Gamma_{F_j}}\Phi^{bj} I^\alpha db\biggl)dq\\
&=\nu \sum_{j=1}^2\sum_{\alpha \in \A_j} \mint_{\Gamma_j^*} [\hat e^\alpha]^*\bigl(g \circ \widehat{H}(q)\bigr) \hat e^\alpha |\Gamma_{F_j}|dq\\
&=\nu\sum_{j=1}^2|\Gamma_{F_j}|\sum_{\alpha \in \A_j} \mint_{\Gamma_j^*} \bigl[g \circ \widehat{H}(q)\bigr]_{0\alpha,0\alpha} dq\\
&=\nu^*\sum_{j=1}^2|\Gamma_j^*|\sum_{\alpha \in \A_j} \mint_{\Gamma_j^*} \bigl[g \circ \widehat{H}(q)\bigr]_{0\alpha,0\alpha} dq\\
&=\nu^*\sum_{j=1}^2\sum_{\alpha \in \A_j} \int_{\Gamma_j^*} \bigl[g \circ \widehat{H}(q)\bigr]_{0\alpha,0\alpha} dq,\\
\end{split}
\end{equation*}
which completes the proof of Theorem \ref{thm:dos}.

\subsection{Proof of Theorem \ref{thm:approx}}
We need to show that, for $\kappa$ sufficiently small,
   \begin{equation*}
   \bigl| \D[H](\phi_{\epsilon\kappa}) - \widehat{\D}_{\epsilon r}[H](\phi_{\epsilon\kappa}) \bigr| \lesssim \kappa^{-3} e^{-\gamma r}.
      \end{equation*}
We define a sequence $\{r_n\}_{n=0}^N$ such that $r_{N-1} = r$, $r_0 = 0$, $r_N = \infty$, and $r_n-r_{n-1} = \mu_\theta$ for $1 \leq n < N-1$. Furthermore, we define the regions
\begin{equation*}
   \Omega^* \supset \Omega_n^* :=
   \begin{cases}
      \lambda_j(q+O_{\epsilon,r_n}) , & n < N, \\
      (\Omega^* \setminus \Omega_{2N-n-1}^*) \cup \mP[\Omega_{N-1}^*], & N \leq n \leq 2N-1, \\
      \Omega^*, & n = 2N.
   \end{cases}
\end{equation*}
The second and third cases, $\Omega_n^*$ for $n \geq N$, are never used in numerical simulations, but are convenient for the analysis.

Next, we define the mapping $J_{U} : \Omega^* \rightarrow U$  for $U \subset \Omega^*$ such that
\begin{equation*}
(J_{U}\psi)_{\alpha}(R) =
{\begin{cases}
\psi_\alpha(R), & R\alpha \in U,\\
0, & \text{ otherwise}.
\end{cases}}
\end{equation*}
We define the associated projection $P_U : \Omega^* \rightarrow \Omega^*$ for $U \subset \Omega^*$ by
\begin{equation*}
P_U = J_U^*J_U.
\end{equation*}
We let $I_U : U \rightarrow U$ be the identity operator. We then define the following operators:
\begin{align*}
\oI_n &: \mP[\Omega^*_n]\rightarrow  \mP[\Omega^*_n]
      & \oI_n &= I_{\mP[\Omega^*_n]}\\
\oJ_n &: \Omega^* \rightarrow \mP[\Omega^*_n]
      & \oJ_n &= J_{\mP[\Omega^*_n]}\\
\partial\oJ_n &: \Omega^*\rightarrow\mP[ \Omega_n^*]\setminus \mP[\Omega_{n-1}^*]
      & \partial \oJ_n &= J_{\mP[ \Omega_n^*]\setminus \mP[ \Omega_{n-1}^*]}\\
\wH_{nm} &: \mP[ \Omega_m^*]\setminus\mP[\Omega_{m-1}^*] \rightarrow \mP[ \Omega_n^*]\setminus \mP[\Omega_{n-1}^*]
      &  \wH_{nm} &= \partial\oJ_n\widehat{H}(q)\partial\oJ_m^*\\
I_{nn} &: \mP[ \Omega_n^*]\setminus \mP[\Omega_{n-1}^*] \rightarrow \mP[\Omega_n^*]\setminus \mP[\Omega_{n-1}^*]
      &I_{nn} &= \partial\oJ_n I \partial\oJ_n^*\\
 J_{nm} &:  \mP[\Omega^*_m] \rightarrow \mP[\Omega^*_n]
      &J_{nm} &= \oJ_n\oJ_m^*\\
\oP_n &:\Omega^* \rightarrow \Omega^*
      &\oP_n &= P_{\mP[\Omega_n^*]}\\
\wH_n &: \Omega^* \rightarrow \Omega^*
      & \wH_n &= \oP_n\widehat{H}(q)\oP_n^*\\
\wH_n' &: \mP[\Omega^*_n] \rightarrow \mP[\Omega^*_n]
      &\wH_n' &= \oJ_n\widehat{H}(q)\oJ_n^*.
\end{align*}
  We suppress the dependence on $\epsilon$, $j$, and $q$ for notational brevity. It suffices then for us to show the following bound ($ n < N$):
\begin{equation}
\label{e:local_bound}
\biggl\|\oJ_0 [\phi_{\epsilon\kappa} \circ \wH_{2N} - \phi_{\epsilon\kappa}\circ \wH_n] \oJ_0^*\biggr\|_2 \lesssim \kappa^{-1} e^{-\gamma r_n}.
\end{equation}
Note that $\wH_{2N} = \widehat{H}(q)$. To prove \eqref{e:local_bound}, we first prove the following Lemma:

\begin{lemma}
   Let $z \in \mathbb{C}$ such that $|z- \epsilon - i\kappa| \ll \kappa$,
   $\ell < k \leq n\leq N$, $k < N$. {Then there exists $\beta_0 > 0$ such that, if $\beta < \beta_0$, then there exists $\gamma' > 0$ such that}
\begin{equation*}
\| J_{kn}(z \oI_n - \wH_n')^{-1}J_{n\ell}\|_2 \lesssim \kappa^{-1}e^{-\gamma'k}.
\end{equation*}
{Note that $\gamma'$ is dependent on $\beta$, $\eta$, and the decay rate of $\widehat{h}$.}
\end{lemma}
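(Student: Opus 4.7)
The plan is a Combes--Thomas style weighted resolvent estimate for $\wH_n'$, adapted to the shell decomposition of $\mP[\Omega_n^*]$. The key input is that on the shells $\mP[\Omega_m^*]\setminus\mP[\Omega_{m-1}^*]$ with $m\geq 1$, the intralayer block of $\wH_n'$ is bounded away from $\epsilon$ by $\eta/\beta$: by construction of $O_{\epsilon,0}$ and $\lambda_j$, any site $R^*\alpha$ in such a shell satisfies $q+R^*\notin q+O_{\epsilon,0}$, so $Q(\epsilon,q+R^*)\leq\beta$ and therefore $\|\epsilon I-\G_jh(q+R^*)\|_2\geq\eta/\beta$. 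The operator $z\oI_n-\wH_n'$ thus has a spectral gap of size $\gtrsim\eta/\beta$ on the outer shells, whereas on the core $\mP[\Omega_0^*]$ the only quantitative bound is $|\mathrm{Im}\,z|^{-1}\sim\kappa^{-1}$.

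\emph{Weighted conjugation.} Fix a small $\gamma'>0$ to be chosen later, and define $w:\mP[\Omega_n^*]\to[0,\infty)$ by $w(\omega)=\gamma' m$ when $\omega\in\mP[\Omega_m^*]\setminus\mP[\Omega_{m-1}^*]$. Set $B(q,z):=e^w(z\oI_n-\wH_n')e^{-w}$, an isospectral version of $z\oI_n-\wH_n'$. Since $\G h^q$ is block diagonal in $R^*$ and $w$ depends only on $R^*$, the intralayer part is unchanged. The interlayer coupling $\hat h^q\ast$ has entries decaying exponentially in $|R^*+\tilde R^*|$ by Assumption~\ref{assump:decay}, and consecutive shells are separated by a fixed spatial step $\mu_\theta$; choosing $\gamma'\ll\tilde\gamma\mu_\theta$ then gives $V:=B-(z\oI_n-\wH_n')$ with $\|V\|_2\lesssim\gamma'\eta$.

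\emph{Invertibility of $B$.} Split $B$ as a $2\times 2$ block operator with respect to $\mP[\Omega_n^*]=\mP[\Omega_0^*]\oplus(\mP[\Omega_n^*]\setminus\mP[\Omega_0^*])$. The outer block has intralayer diagonal of norm $\geq\eta/\beta$, perturbed by the interlayer coupling (norm $\lesssim\eta$) and by $V$ (norm $\lesssim\gamma'\eta$); for $\beta<\beta_0$ small enough, it is invertible with inverse of norm $\lesssim\beta/\eta$. A Schur complement then reduces invertibility of $B$ to the finite-dimensional inner block on $\mP[\Omega_0^*]$, whose effective operator is $zI_{\mP[\Omega_0^*]}-\G h^q$ minus a correction consisting of two interlayer couplings with the outer inverse sandwiched between, plus the restriction of $V$. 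This correction has norm $\lesssim\beta\eta+\gamma'\eta$, which is $<\kappa/2$ once $\beta$ and $\gamma'$ are chosen small, so the effective inner operator has imaginary part $\geq\kappa/2$ and its inverse has norm $\lesssim\kappa^{-1}$. Assembling both blocks yields $\|B^{-1}\|_2\lesssim\kappa^{-1}$.

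\emph{Passage to the decay estimate and main obstacle.} Writing $(z\oI_n-\wH_n')^{-1}=e^{-w}B^{-1}e^w$, the factor $e^w J_{n\ell}$ has norm $\leq e^{\gamma'\ell}$ (since $J_{n\ell}$ is supported in $\mP[\Omega_\ell^*]$), while the output of the resolvent decays exponentially as one moves outward from $\mP[\Omega_\ell^*]$: the Combes--Thomas factor on the outer portion of $\mP[\Omega_k^*]$ reached by $J_{kn}$ is $e^{-\gamma'(k-\ell)}$. For the intended application $\ell$ ranges over a bounded set, and we conclude
$$\bigl\|J_{kn}(z\oI_n-\wH_n')^{-1}J_{n\ell}\bigr\|_2\lesssim \kappa^{-1}e^{-\gamma' k},$$
with $\gamma'$ fixed as a function of $\beta$, $\eta$, and the decay rate of $\hat h$. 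The hardest step is the Schur-complement argument: one must verify that the core correction built from the outer-block inverse ($\lesssim\beta/\eta$) and two interlayer couplings ($\lesssim\eta$) is strictly below $\kappa/2$, which requires careful balancing of $\beta$, $\gamma'$, and $\kappa$ and is the reason for the hypotheses $\beta<\beta_0$ and $\kappa<\kappa_0$ in Theorem~\ref{thm:approx}.
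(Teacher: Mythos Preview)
Your Combes--Thomas strategy is a genuine alternative to the paper's proof, which instead sets up a componentwise recursion on the shell norms $x_k=\|G_{k\ell}\|_2$, writes it as $x\le\kappa^{-1}\tilde\beta(\psi^0+\psi^{2N})+Vx$ with an exponentially local transfer matrix $V$ of spectral radius $\lesssim\beta$, and iterates. Your approach is more structural; the paper's is more elementary and sidesteps the non-self-adjointness introduced by the conjugation.

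There is, however, a real gap in your Schur-complement step. You bound the \emph{norm} of the core correction by $\beta\eta+\gamma'\eta$ and then assert this is $<\kappa/2$. That inequality forces $\beta$ (and hence the region $O_{\epsilon,0}$ and ultimately $\gamma'$) to depend on $\kappa$, which destroys precisely the $\kappa$-independence of the exponent that distinguishes this lemma from the naive real-space bound. The correct observation is that only the \emph{imaginary part} of the correction must be below $\kappa/2$. Because $w\equiv 0$ on the core, the Schur complement of the outer block in $B$ coincides with the Schur complement in the unconjugated operator: conjugation by ${\rm diag}(I,e^{w_o})$ leaves $A-BD^{-1}C$ invariant. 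Thus the correction is exactly $C=\wH_{\rm co}(zI-\wH_{\rm out})^{-1}\wH_{\rm oc}$ with $\wH_{\rm out}$ self-adjoint and spectral distance $\gtrsim\eta/\beta$ from $\epsilon$. A direct computation gives
\[
\tfrac{1}{2i}(C-C^*) \;=\; -\,{\rm Im}(z)\,\wH_{\rm co}\,(zI-\wH_{\rm out})^{-1}(\bar z I-\wH_{\rm out})^{-1}\,\wH_{\rm oc},
\]
whose norm is $\lesssim\kappa\cdot\eta\cdot(\beta/\eta)^2\cdot\eta=\kappa\beta^2$. So for $\beta<\beta_0$ fixed independently of $\kappa$ one gets ${\rm Im}\,S\ge\kappa/2$ and $\|S^{-1}\|\lesssim\kappa^{-1}$; the remaining blocks of $B^{-1}$ are then bounded by $\kappa^{-1}$ as you indicate. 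With this repair your argument goes through. Note also that in the final step $J_{kn}$ restricts to all of $\mP[\Omega_k^*]$, including the core where $w=0$; the decay factor $e^{-\gamma' k}$ only attaches to the outer-shell component, so the stated bound should really be read with $\partial\oJ_k$ in place of $J_{kn}$ (the paper's own notation is loose on this point as well).
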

\begin{proof}
We define
\begin{align*}
G_{k\ell} &:= J_{kn}(z \oI_n - \wH_n)^{-1}J_{n\ell} \quad \text{and} \\
X &:= \{1,\cdots, 2N-1\}.
\end{align*}
Then, applying the Schur Complement, we have
\begin{align*}
   G_{k\ell} &=  -(z I_{kk} - \wH_{kk})^{-1} \wH_{k0}G_{0\ell} -(z I_{kk} - \wH_{kk})^{-1} \wH_{k,2N}G_{2N,\ell} \\
   &\qquad -\sum_{s \in X\setminus\{k\}} (z I_{kk} - \wH_{kk})^{-1} \wH_{ks}G_{s\ell}.
\end{align*}
Next, let the operator $V : \ell^2(X) \rightarrow \ell^2(X)$ be defined by
\begin{equation*}
V_{ks}:=\eta^{-1}\beta \|\wH_{ks}\|_2(1-\delta_{ks}).
\end{equation*}
{We recall $\widehat{H}(q)$ exhibits exponential decay since $\hat h^q$ decauys exponentially and $d$ is local.} Hence there exists {$\gamma^*>0$} such that
\begin{equation*}
V_{ks}\leq \beta e^{-\gamma^*(|k-s|-1)}(1-\delta_{ks}).
\end{equation*}
Let $\tbeta := e^{\gamma^*}\beta,$ then we also have, for $k \neq 0$ and $k \neq 2N$,
\begin{equation*}
\|(z I_{kk} - \wH_{kk})^{-1}\|_2 \leq \beta \eta^{-1}.
\end{equation*}
Recall $\beta < 1$ is chosen when defining $Q(\cdot,\cdot)$. We also have that
\begin{equation*}
x_s := \|G_{s\ell}\|_2 \leq \kappa^{-1}.
\end{equation*}
We therefore have
\begin{equation}
\label{e:temp1}
\|G_{k\ell}\|_2 \leq \tbeta e^{- \gamma^* |k|}\kappa^{-1} + \tbeta e^{-\gamma^* |2N-k|} \kappa^{-1}+ \sum_{s \in X\setminus\{k\}} \beta\eta^{-1} \|\wH_{ks}\|_2 \|G_{s\ell}\|_2.
\end{equation}
Let $\psi^y \in \ell^2(X)$ such that  $\psi^y_s = e^{-\tilde \gamma |s-y|}$, then \eqref{e:temp1} can be rewritten as
\begin{equation}
x_k \leq \kappa^{-1}\tbeta(\psi^0_k + \psi^{2N}_k)  +  ( Vx)_k,
\end{equation}
{or, expressed as a vector inequality (understood componentwise),}
\begin{equation}
\label{e:recurs}
x \leq \kappa^{-1}\tbeta(\psi^0 + \psi^{2N})  +  Vx.
\end{equation}
We also have that $\|V\|_2 \leq \beta$ and $\|x\|_2 \leq \kappa^{-1}$. We can then apply \eqref{e:recurs} to itself $m$-times to obtain
\begin{equation*}
x \leq \kappa^{-1}\tbeta \sum_{j=0}^mV^j(\psi^0+\psi^{2N}) +  V^mx.
\end{equation*}
Next, we use the exponential localization of $V$, which gives $\|[({z'}-V)^{-1}]_{0k}\|_2 \lesssim e^{-\gamma_0 k}$ for $d(z',[-\|V\|_2,\|V\|_2]) \sim \alpha \|V\|_2$ for some $\alpha < 1$ { and $\gamma_0 < \gamma^*$ (Lemma 2.2 of \cite{ChenOrtnerTB})}. We choose a contour $\C$ around $[-\|V\|_2,\|V\|_2]$ such that $d(z',[-\|V\|_2,\|V\|_2]) \sim \alpha \|V\|_2$. Then we have
\begin{equation*}
|[V^j]_{ks}| \lesssim  \frac{1}{2\pi i} \oint_{\C} |z'|^j \bigl| [(z'-V)^{-1}]_{ks}\bigr|\cdot |dz'| \lesssim \alpha_0^je^{-\gamma_0|k-s|}.
\end{equation*}
{Here
\begin{equation*}
\alpha_0 = \sup_{z' \in \C} d(z', [-\|V\|_2,\|V\|_2]).
\end{equation*}}
This then gives
\begin{equation*}
| \sum_s [V^j]_{ks}e^{-{ \gamma^*} s}| \lesssim \alpha_0^j\sum_{s \geq 0} e^{-\gamma_0 |k-s|}e^{-\gamma^* s} \lesssim k\alpha_0^je^{- \gamma_0 k},
\end{equation*}
hence we deduce
\begin{equation*}
x_k \lesssim \kappa^{-1} k  {\alpha_0^m}e^{-\gamma_0 k} + \beta^m \kappa^{-1}.
\end{equation*}
Choosing $m \sim k$ with appropriate balancing constant we obtain
\begin{equation*}
x_k \lesssim \kappa^{-1}e^{-\gamma' k},
\end{equation*}
{for some $\gamma'$ dependent on $\eta$, $\beta$, and $\gamma^*$}, which establishes the desired result. {Note that for $\beta < \beta_0$ for some $\beta_0 < 1$, we can guarantee $\gamma' > 0$.}
\end{proof}

We define a contour $\C$ enclosing the spectrum of $\wH_n$ $\forall n$ such that $\Imag(z) = \pm\kappa$ on $\alpha[-\|\wH_n\|_2,\|\wH_n\|_2]$ for some $\alpha$ such that $ \kappa \ll \alpha-1$ (See Figure \ref{fig:contour}). Then,
\begin{equation*}
\begin{split}
& \oJ_0 [\phi_{\epsilon\kappa} \circ \wH_{2N} - \phi_{\epsilon\kappa}\circ \wH_n] \oJ_0^* \\
&= \frac{1}{2\pi i}\oJ_0 \oint_{\C}\phi_{\epsilon\kappa}(z)[ (zI- \wH_{{2N}})^{-1} - (zI- \wH_n)^{-1}]dz \oJ_0^*\\
&= \frac{1}{2\pi i}\oJ_0 \oint_{\C}\phi_{\epsilon\kappa}(z)[ (zI- \wH_{2N})^{-1}(\wH_{2N}-\wH_n) (zI- \wH_n)^{-1}]dz \oJ_0^*\\
&= \frac{1}{2\pi i}\oint_{\C}\phi_{\epsilon\kappa}(z)[ \oJ_0(zI- \wH_{2N})^{-1}\sum_{0 \leq k \leq 2N}\oJ_k^*\oJ_k(\wH_{2N}-\wH_n)\sum_{0 \leq s \leq 2N}\oJ_s^*\oJ_s (zI- \wH_n)^{-1}]dz \oJ_0^*\\
\end{split}
\end{equation*}
We pick $L >0$ large but small enough such that we have {$\kappa L < \frac{1}{2}(\beta^{-1}-1)$}.
Therefore we have
\begin{equation*}
\begin{split}
\|\oJ_0 [\phi_{\epsilon\kappa} \circ \wH_{2N} - \phi_{\epsilon\kappa}\circ \wH_n] \oJ_0^*\|_2 \lesssim &\oint_{\C : |\Real(z)-\epsilon| < L\kappa} |\phi_{\epsilon \kappa}(z)| \sum_{0 \leq k \leq n < s\leq 2N} \kappa^{-2} e^{-\gamma' (k+s)}  e^{- {\gamma^*} |k-s|}\\
& + \kappa^{-3} e^{-L^2/2}.
\end{split}
\end{equation*}
Hence there exists $\tilde \gamma' > 0$ such that
\begin{equation*}
\|\oJ_0 [\phi_{\epsilon\kappa} \circ \wH_{2N} - \phi_{\epsilon\kappa}\circ \wH_n] \oJ_0^*\|_2 \lesssim \kappa^{-3} e^{-\tilde \gamma'n} + \kappa^{-3} e^{-L^2/2}.
\end{equation*}
We can pick $L \sim n^{1/2}$, and so we conclude there is a $\gamma>0$ such that
\begin{equation*}
\|\oJ_0 [\phi_{\epsilon\kappa} \circ \wH_{2N} - \phi_{\epsilon\kappa}\circ \wH_n] \oJ_0^*\|_2 \lesssim \kappa^{-3} e^{-\gamma r}.
\end{equation*}
This completes the proof of Theorem \ref{thm:approx}.

\begin{figure}[ht]
\centering
\includegraphics[width=.5\linewidth]{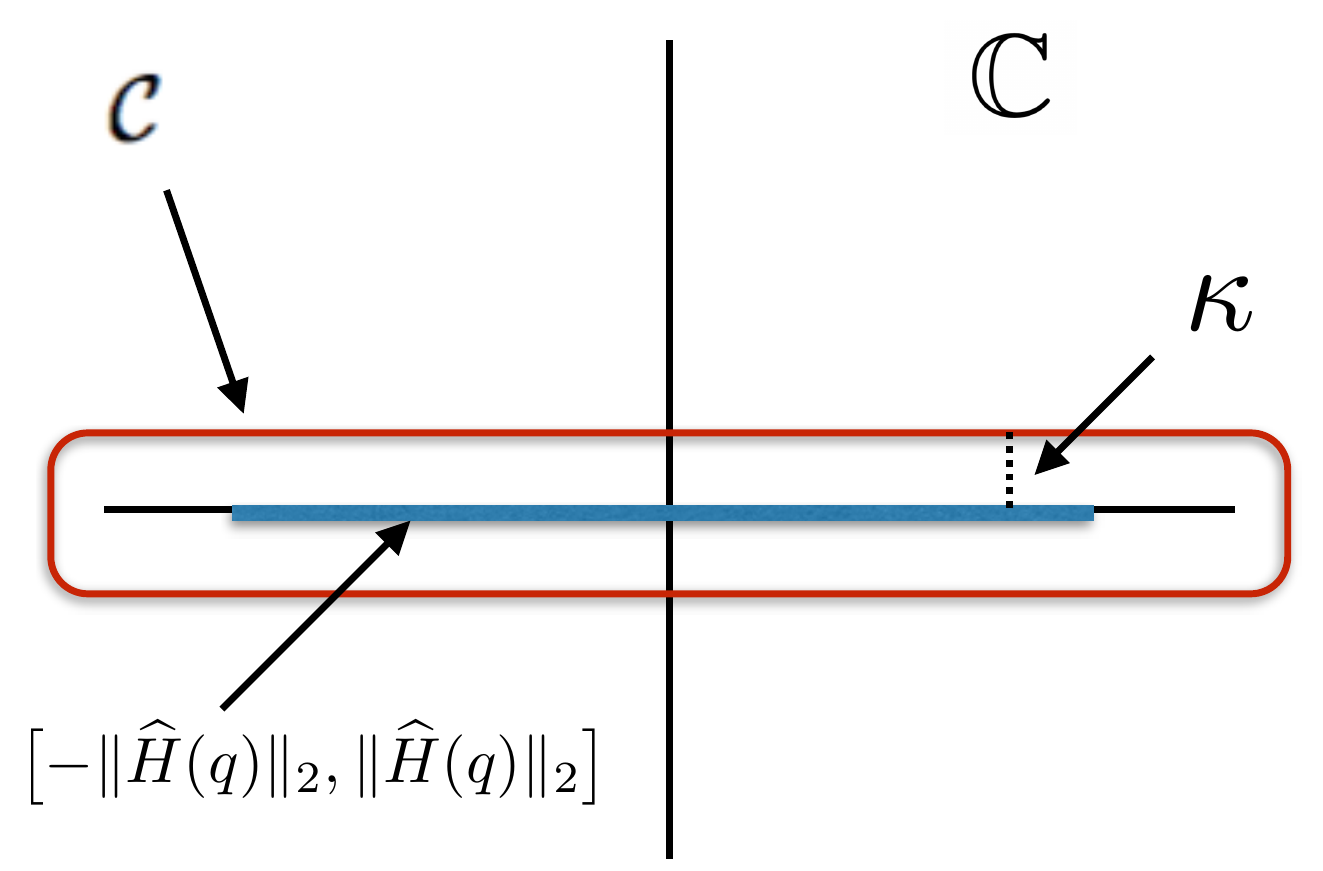}
\caption{Contour enclosing the spectrum of $\widehat{H}_n$ $\forall n$.}
\label{fig:contour}
\end{figure}

\end{section}

\end{document}